\documentclass[year=26,pdfa]{fmcad}

\usepackage{physics}
\usepackage{amsmath, amssymb, amsthm}
\usepackage[T1]{fontenc}
\usepackage{graphicx}
\usepackage{color}

\usepackage{doi}
\usepackage{caption}
\usepackage{siunitx}
\usepackage{booktabs}
\usepackage{threeparttable}
\usepackage{multirow}
\usepackage{subcaption}
\usepackage{csquotes}
\MakeOuterQuote{"}
\usepackage{refcount}

\newtheorem{theorem}{Theorem}
\newtheorem{lemma}{Lemma}
\theoremstyle{definition}
\newtheorem{example}{Example}

\newcommand{\set}[2]{\{ #1 \;|\; #2 \}}
\newcommand{\Fam}{\mathcal{F}}
\newcommand{\iso}{\cong}
\newcommand{\eqc}[1]{\left[ #1 \right]}
\newcommand{\ext}{\texttt{ext}}
\newcommand{\comp}{\texttt{comp}}
\newcommand{\prfsys}{\textsc{symbreak}}
\newcommand{\branch}{\textbf{Branch}}
\newcommand{\prune}{\textbf{Prune}}
\newcommand{\node}[2]{\langle #1, #2 \rangle}
\newcommand{\Nodes}{\mathcal N}

\newcommand{\nat}{\mathbb{N}}

\ExplSyntaxOn

\prop_new:N \g_restatable_prop

\NewDocumentEnvironment{restatable}{ m m +b }
 {
  \prop_gput:Nnn \g_restatable_prop { #2 } { { #1 } { #3 } }
  
  \begin{#1} \label{#2} #3 \end{#1}
 }
 { }

\NewDocumentCommand{\restate}{ m }
 {
  \group_begin:
    \prop_get:NnN \g_restatable_prop { #1 } \l_tmpa_tl
    
    \cs_set:Npn \__restatable_render:nn ##1 ##2 {
      \exp_args:Nc \renewcommand { the ##1 } { \ref{#1} }
      
      \cs_set_eq:NN \label \use_none:n 
      
      \begin{##1} ##2 \end{##1}
    }
    
    \exp_after:wN \__restatable_render:nn \l_tmpa_tl
  \group_end:
 }

\ExplSyntaxOff

\begin{document}
\title{Solution Space Partitioning for Extremal Set Theory}

\author{
  \IEEEauthorblockN{
    Jesse Looney\IEEEauthorrefmark{1}~\orcid{0009-0009-5494-1491},
    Jonah McDonald\IEEEauthorrefmark{1}~\orcid{0009-0009-1627-3197},
    Allison Klingler\IEEEauthorrefmark{1}~\orcid{0009-0002-4850-9562},
    Gloria Wu\IEEEauthorrefmark{1}~\orcid{0009-0008-9861-003X},
    Jonad Pulaj\IEEEauthorrefmark{2}~\orcid{0000-0002-5741-7187},
    Haoze Wu\IEEEauthorrefmark{1}~\orcid{0000-0002-5077-144X}
  }
  \IEEEauthorblockA{\IEEEauthorrefmark{1} Amherst College, Amherst, USA}
  \{jlooney27, jmcdonald27, aklingler27, gwu28, hwu\}@amherst.edu
  \IEEEauthorblockA{\IEEEauthorrefmark{2} Davidson College, Davidson, USA}
  jopulaj@davidson.edu
}

\maketitle
\begin{abstract}
We present a method for partitioning the solution space of statements in extremal set theory. Compared with domain-agnostic partitioning methods like look-ahead, we perform case analysis on the strategies by which a candidate solution can be constructed. We demonstrate that our approach can decompose problems in extremal set theory more effectively than look-ahead. Combining this new partitioning strategy with an exact proof-producing MILP solver, we are able to verify larger finite cases of Chvátal's Conjecture---a long-standing open question in extremal combinatorics\allowbreak{}---\allowbreak{}compared to previous work.

\end{abstract}
\begin{IEEEkeywords}
Extremal Set Theory, Cube and Conquer, Solution Space Partitioning
\end{IEEEkeywords}

\section{Introduction}

Computational methods have increasingly been used to prove mathematical conjectures, particularly in combinatorics and discrete mathematics~\cite{brakensiek_resolution_2022,hales_kepler_2017,heule_schur_2018,heule_solving_2016,kenter_IP-graph-pebbling_2018,giuseppe_IP-counterexamples_2020,parczyk_computer-assisted-combinatorics_2023,pulaj_cutting-planes_2017,subercaseaux_automated_2024}. A common paradigm is to encode statements as constraint satisfaction or optimization problems, and use general-purpose solvers to search for counterexamples or certify correctness. For example, it has been demonstrated that carefully engineered integer programming frameworks can produce machine-checkable proofs for instances of conjectures in extremal combinatorics~\cite{eifler_safe_2022}.

Statements in extremal set theory often concern families of sets over a ground set $[n]$. Even for modest values of $n$, the search space becomes prohibitively large: a family of subsets of $[n]$ is a subset of $2^{[n]}$, so the total number of possible families is $2^{2^n}$, which grows doubly exponentially in $n$. As a result, such problems are typically only tractable for small $n$, with solver performance as the primary bottleneck. Nevertheless, the ability to verify conjectures for larger values of $n$ is valuable: many extremal statements, when false, admit counterexamples at small values of $n$. Extending the verified range therefore both narrows the domain in which counterexamples can exist and increases confidence in the conjecture.

One promising approach to improving solver scalability is parallelization. The state-of-the-art approach for SAT-based theorem proving is Cube and Conquer~\cite{hutchison_cube_2012}, which partitions a problem into thousands to millions of independent subproblems using look-ahead heuristics. While effective in many applications, we observe empirically that Cube and Conquer based on look-ahead heuristics does not effectively decompose problems arising from extremal set theory into substantially easier subproblems.

In this paper, we propose a method for partitioning the solution space of statements in extremal set theory at a semantic level. Instead of branching on variables in the underlying encodings, we partition directly at the level of set families. The key idea is to branch on \emph{strategies} by which a counterexample (i.e., a family of sets) could be constructed. We represent a strategy using representative sets that must be included in the family together with banned subfamilies that must be excluded. We formalize this approach as an abstract calculus that supports 1) repeated partitioning of a collection of candidate families into smaller sub-collections, and 2) pruning of sub-collections that cannot satisfy the statement of interest.  We prove soundness, completeness, termination, and other fundamental proof-theoretic properties of this calculus.

We implement our proof system in Rust, taking as input a problem encoding, a solver oracle, and a ground set. We focus on Chvátal's Conjecture~\cite{chvatal_conjecture_1974}, a long-standing open question in extremal combinatorics, for which the previous best certified computational verification handles ground sets of size $7$~\cite{eifler_safe_2022}. We develop a SAT encoding of Chvátal's Conjecture and demonstrate that our semantic partitioning method decomposes the formula more effectively than look-ahead-based Cube and Conquer. We then apply our partitioning strategy to an existing Integer Linear Programming (ILP) encoding of Chvátal's Conjecture~\cite{eifler_safe_2022} and use the proof-producing MILP solver SCIP~\cite{hojny2025scipoptimizationsuite10} to solve the resulting subproblems. In both the SAT and ILP settings, our partitioning significantly reduces problem difficulty. Moreover, with ILP, we generate and verify machine-checkable proof certificates for the partitions, which constitute a proof of Chvátal's Conjecture for a ground set of size $8$.

In summary, we make the following contributions:
\begin{itemize}
    \item a semantic-level partitioning method for extremal set theory;
    \item a formalization of the method as a calculus, with proofs of soundness, completeness, termination, and related properties;
    \item an oracle-agnostic Rust implementation of our partitioning strategy;
    \item an experimental evaluation using both SAT and ILP oracles, demonstrating the effectiveness of our partitioner; and
    \item a verified proof of Chvátal's Conjecture for $n=8$.
\end{itemize}

\section{Background}
\paragraph{Extremal Combinatorics} Let $n$ be a positive integer. Define the \emph{ground set} $[n]$ to be the set $\{1, 2, \dots, n\}$ and denote the powerset of $[n]$ by $2^{[n]}$. A \textit{family} is a collection of sets---an element of $\Fam = 2^{2^{[n]}}$. A family $d \in \Fam$ is called a \textit{downset} if it is closed under taking subsets---that is, for any $s \subseteq t \in 2^{[n]}$, if $t \in d$, then $s \in d$. An \textit{intersecting} family is one whose members have pairwise nonempty intersections. A special case of an intersecting family is a \textit{star}, a family for which there is some $e \in [n]$ contained in every member of the family. 

\paragraph{Chvátal’s Conjecture} The conjecture we use as a case study is \textit{Chvátal’s Conjecture}. Chvátal’s Conjecture states that, in any downset, there are no intersecting subfamilies strictly larger than the largest star. We think of this conjecture in terms of the existence of a counterexample---an intersecting family that generates a downset whose stars are all smaller than the intersecting family.

\paragraph{Isomorphism of Families} We say two families $f$ and $g$ are isomorphic, and write $f \iso g$, if there exists a permutation $\pi: [n] \to [n]$ such that $\pi(f) = g$ (where $\pi(f) = \set{\pi(s)}{s \in f}$ for any family $f$, and $\pi(s)$ is defined analogously for any set $s$). Isomorphism is an equivalence relation; we denote the equivalence class of a family $f$ under isomorphism by $\eqc f$.

\paragraph{Cube and Conquer}
In \cite{hutchison_cube_2012}, Heule et al. take a new approach to SAT solving by combining CDCL and look-ahead solvers that allows for the solving of larger scale problems. Cube and Conquer, their solution, is also natural to parallelize.

Cube and Conquer uses a partitioning heuristic to select a set of $n$ variables, partitioning the problem into $2^n$ subproblems based on partial assignments. A look-ahead heuristic is used to select the $n$ variables, choosing ones that prune the search space by the greatest amount. The look-ahead solvers choose a variable $x$ and split on it, assigning both true and false values, and use unit propagation to obtain a simplified boolean formula, or a cube.

Cube and Conquer's goal is to generate cubes that simplify the original formula and which can then be solved in parallel as independent cases of the problem. If the difficulty of the original problem is well distributed over the cubes, this approach can result in massive parallel speedup. In the case of Chvátal's Conjecture, however, we found that Cube and Conquer's look-ahead solver produced unbalanced partitions with limited simplification of the original problem. To improve the quality of the partition for problems in extremal set theory, we propose a partitioning method that splits on the inclusion of sets in a family rather than the truth of propositional variables. 

\section{Related Work}

\paragraph{Chvátal's Conjecture} Eifler et al. showed that Chvátal’s conjecture can be reformulated as an integer program, and were able to prove that the conjecture holds for ground sets of size at most $n=7$ using an exact, proof-producing variant of the MILP solver SCIP combined with the certificate checker VIPR \cite{eifler_safe_2022}. Later, an improved version of exact SCIP solved $n=8$ in about 5.2 days \cite{eifler_algorithms_2025}. However, an actual certificate was not generated in this sequental run, and \cite{eifler_algorithms_2025} reports that such proof certificate  would exceed 1\,TB and that this is infeasible for the proof checker VIPR to validate. Plus, generating actual proof would further increase the solver runtime. In this work, we produce a partitioned proof consisting of smaller certificates and verify it with VIPR. Previous mathematical explorations also include \cite{friedgut2016chvatalsconjecturecorrelationinequalities}, which assesses the conjecture using correlation inequalities between Boolean functions on sets. Secondly, \cite{czabarka2017chvatalsconjecturedownsetssmall} proves the conjecture for all downsets where subsets contain at most three elements.

\paragraph{Parallel Solving} There are two common paradigms for parallelizing SAT solvers: portfolio solving, where each thread runs the same problem with a different approach, and divide and conquer, where a problem is partitioned into smaller subproblems that are solved in parallel. Cube and Conquer~\cite{hutchison_cube_2012} is 
a standard approach for parallel SAT solving in the case of theorem proving. Other parallel SAT solvers include Paracooba \cite{paracooba} and PaInleSS \cite{painless}. While there have been some forays into parallelizing ILP solvers \cite{ralphs_parallel-MILP_2018}, we are not aware of any parallelized exact ILP solvers capable of producing proofs.

\paragraph{Symmetry Breaking in SAT and ILP} Symmetry breaking is a method by which to create the subproblems that a divide and conquer parallelization approach requires. \cite{cdcl_sym} introduces symmetry breaking for CDCL-based SAT solving. They utilize a similar method of investigating isomorphisms of a problem that is SAT specific, utilizing a dynamic method to prune the SAT problem as it is being explored. \cite{kirchweger_sat-modulo-sym} also extends CDCL with symmetry breaking and applies it to the problem of generating graphs. \cite{satsuma} utilizes additional symmetry structures within SAT. \cite{margot_pruning_2002} prunes by isomorphisms on ILP problems with constraints of the form $Ax \geq b$, utilizing aspects of group theory applied to an LP solver. We focus on solver- and encoding-agnostic symmetry-breaking specialized to extremal set theory.

\section{Methodology}
We are interested in proving statements about families of sets. To this end, some definitions are in order. We call $\phi$ a \textit{predicate on families} if it has the signature $\phi: \Fam \to \{true, false\}$. For any predicate $\phi$ on families, we define an \textit{existential predicate} $\phi_\exists$ on collections of families by the following rule: for any collection of families $F \in 2^\Fam$, we have $\phi_\exists(F) = true \iff \exists f \in F.\, \phi(f)$. We say $\phi$ is \textit{isomorphism-invariant} if, for any $f, g \in \Fam$ with $f \iso g$, we have $\phi(f) = \phi(g)$. We say $\phi$ is \textit{satisfiable} if there exists $g \in \Fam$ such that $\phi(g) = true$, unsatisfiable otherwise.

\newcommand{\chvatalpred}{\phi^{Chv}}

\begin{example}
We can formulate Chvátal's Conjecture as a family predicate: $\chvatalpred(g) \iff$ "$g$ is an intersecting family which generates a downset such that the two violate Chvátal's Conjecture". Later, we will check using the following methods that $\chvatalpred_\exists(\Fam) = false$ for $n = 8$ (recall $\Fam = 2^{2^{[n]}}$), verifying that there are no counterexamples among these families---Chvátal's Conjecture holds for ground sets up to size $8$.
\end{example}

Given some family predicate $\phi$, we present a method of partitioning $\Fam$, the collection of all families, into subcollections for which the truth of $\phi_\exists$ (e.g. whether the subcollection contains counterexamples to Chvátal's Conjecture) can be checked in parallel. We motivate our method by analogy to Cube and Conquer, which we view as partitioning the collection of all total variable assignments into independent subcollections. Each subcollection is represented by a partial assignment, which contains literals specifying the values of certain variables. The partial assignment stands in for every possible total assignment that could be reached by "completing" the partial assignment by including additional literals. The partition is refined by branching on the truth of some variable, turning one partial assignment into two children that each extend it with a new literal. 

The following functions partially capture these notions and apply them to the realm of families.
\begin{align}
    \ext(f) &= \set{f \cup \{s\}}{s \in 2^{[n]} \setminus f} \\
    \comp(f) &= \set{g \in \Fam}{g \supseteq f} \label{eq:comp-example}
\end{align}
If we think of a family $f$ as a "partial family", then $\comp(f)$ gives all possible "completions" of $f$, achieved by including additional sets, while $\ext(f)$ gives all the incremental "extensions" of $f$ by a single additional set. This pair of functions satisfies the following property:
\begin{equation}
    \label{eq:ext-comp-property}
    \forall f \in \Fam.\, \forall g \in \ext(f).\, \comp(g) \subset \comp(f)
\end{equation}
which says that extending $f$ to obtain $g = f \cup \{s\} \in \ext(f)$ \textit{strictly} reduces the range of possible completions. Just like we extend partial variable assignments with extra literals to focus on particular cases of total assignment, we can extend partial families with extra sets to focus on particular kinds of families.

For instance, observe that $\comp(\emptyset) = \Fam$ contains intersecting families. However, if we consider the extension $\{\emptyset\} \in \ext(\emptyset)$, we find that $\comp(\{\emptyset\})$ contains no intersecting families, because any family containing the empty set cannot be intersecting. If we were searching for intersecting families in $\Fam = \comp(\emptyset)$, we could partition into $\comp(\{\emptyset\})$ and $\Fam \setminus \comp(\{\emptyset\})$, and then immediately eliminate the first collection without exhaustively checking it. However, the notion of "partial families" does not give us any way to denote the exclusion of a particular set, such as $\emptyset$. For this reason, we will introduce a construct equipped with both a partial family and a collection of banned families, analogous to negative literals in a partial variable assignment.

While we make analogy to Cube and Conquer, the crucial difference is that our method reasons directly at the level of families. This allows us to recognize when two apparently different ways of extending a family are in fact isomorphic, which can greatly reduce duplicated work if we know that $\phi$ is isomorphism-invariant. In this section, we first describe a general approach to partitioning $\Fam$ by incrementally extending a partial family. Then we elaborate on an instantiation of this method that employs symmetry-breaking.

\subsection{The Core Proof System}

It turns out that property \ref{eq:ext-comp-property} is all we need to state the most general version of our partitioning scheme; we call any pair of functions $\ext, \comp: \Fam \to 2^\Fam$ an \textit{extension-completion pair} if they satisfy this property.

Given a predicate $\phi$ on families and an extension-completion pair $\ext$ and $\comp$, we define a proof system $\prfsys(\phi, \ext, \comp)$ in the following way. First, a \textit{node} is a pair $N = \node{f}{B}$ of a current family $f$ and a collection $B \in 2^\Fam$ of banned families. When we ban a family, we really want to ban all its completions. Hence, for any collection $B \in 2^\Fam$ of families, we define $\comp(B) = \bigcup_{b \in B} \comp(b)$. Every node $N$ represents a single collection of families, called its \textit{concretization}, given by
\begin{equation}
    N^\# = \comp(f) \setminus \comp(B).
\end{equation}
The concretization of a collection $\Nodes$ of nodes is given by $\Nodes^\# = \bigcup_{N \in \Nodes} N^\#$.

The calculus operates over a \textit{state} consisting of a collection $\Nodes$ of nodes that is updated at each step. There are two inference rules. \branch{} refines the partition by splitting a node into two children based on taking or banning an extension $p$. \prune{} removes nodes whose concretizations contain no families satisfying $\phi$. In either case, we hope to preserve the existence of satisfying families: the $\Nodes$ should contain families satisfying $\phi$ if and only if the same is true for the final $\Nodes$ after some rule applications. 

{
\small
\begin{gather*}
    \branch \\[3pt]
    \frac{
        N = \node{f}{B} \in \Nodes \quad p \in \ext(f) \quad \comp(p) \not\subseteq \comp(B)
    }{
        \Nodes \gets (\Nodes \setminus \{N\}) \cup \{ \node{p}{B}, \node{f}{B \cup \{p\}} \}
    } \\[10pt]
    \prune \\[3pt]
    \frac{
        N \in \Nodes \quad \phi_\exists(N^\#) = false
    }{
        \Nodes \gets \Nodes \setminus \{N\}
    }
\end{gather*}
}

\begin{example}
Let $\phi(f)$ be the statement that $f$ is an intersecting family, and consider the proof system $\prfsys(\phi, \ext, \comp)$ using the definitions of $\ext$ and $\comp$ given above. Let $N = \node{\emptyset}{\emptyset}$. The concretization is $N^\# = \Fam$. If our collection of nodes is $\Nodes = \{N\}$, we can branch on $p = \{\emptyset\}$ to obtain two child nodes $N_1 = \node{\{\emptyset\}}{\emptyset}$ and $N_2 = \node{\emptyset}{\{\{\emptyset\}\}}$ and an updated $\Nodes = \{N_1, N_2\}$. $N_1^\#$ denotes all families containing the empty set, while $N_2^\# = \comp(\emptyset) \setminus \comp(\{\{\emptyset\}\}) = \Fam \setminus \comp(\{\emptyset\})$ denotes all families not containing the empty set. We can easily prove $\neg \phi_\exists(N_1^\#)$ (that is, none of the families in $N_1^\#$ are intersecting families, because they contain the empty set), allowing us to prune $N_1$ and obtain $\Nodes = \{N_2\}$. Note that we have not lost any intersecting families while doing these transformations, a property we will see is guaranteed by the system. 
\end{example}

We prove three highly-general results about $\prfsys$. Theorem \ref{thm:symbreak-preserves-counterexamples} ensures that the inference rules preserve families satisfying $\phi$, so that our search is both sound and complete. Theorem \ref{thm:symbreak-preserves-disjoint} ensures that the "partition" resulting from applications of the inference rules is indeed disjoint. These results follow from induction on Lemmas \ref{lem:children-union} and \ref{lem:children-intersection} proven below (full proofs for these theorems and any other results not proven in the main text can be found in the online appendix).

\begin{lemma}
    \label{lem:children-union}
    Let $N_1 = \node{p}{B}$ and $N_2 = \node{f}{B \cup \{p\}}$ be the child nodes spawned by applying the \branch{} rule to a node $N = \node{f}{B}$ in the proof system $\prfsys(\phi, \ext, \comp)$. Then $N_1^\# \cup N_2^\# = N^\#$.
\end{lemma}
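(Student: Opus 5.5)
We will verify the set equality $N_1^\# \cup N_2^\# = N^\#$ directly by unfolding the definitions. Write $N_1^\# = \comp(p) \setminus \comp(B)$, $N_2^\# = \comp(f) \setminus \comp(B \cup \{p\})$, and $N^\# = \comp(f) \setminus \comp(B)$. The first step is to note that, by the definition $\comp(B) = \bigcup_{b \in B} \comp(b)$, the banned completions of the second child split as $\comp(B \cup \{p\}) = \comp(B) \cup \comp(p)$. Hence $N_2^\# = (\comp(f) \setminus \comp(B)) \setminus \comp(p) = N^\# \setminus \comp(p)$.

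Next, since $p \in \ext(f)$, the extension-completion property \eqref{eq:ext-comp-property} gives $\comp(p) \subset \comp(f)$. Therefore
\begin{equation*}
N_1^\# = \comp(p) \setminus \comp(B) = (\comp(p) \cap \comp(f)) \setminus \comp(B) = N^\# \cap \comp(p).
\end{equation*}
Combining the two computations, $N_1^\# \cup N_2^\# = (N^\# \cap \comp(p)) \cup (N^\# \setminus \comp(p)) = N^\#$. This is the elementary identity $X = (X \cap Y) \cup (X \setminus Y)$.

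None of the steps is a real obstacle. The only point needing care is the use of property \eqref{eq:ext-comp-property}: it is what makes $\comp(p) \setminus \comp(B)$ lie inside $N^\#$, and it is the sole place where the hypothesis $p \in \ext(f)$ enters. The side condition $\comp(p) \not\subseteq \comp(B)$ of \branch{} is not needed for this equality. It only guarantees that $N_1^\#$ is nonempty, so the branch is nontrivial. The same decomposition also shows $N_1^\# \cap N_2^\# = \emptyset$, which is what Lemma \ref{lem:children-intersection} will need.
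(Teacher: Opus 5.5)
Your proof is correct and is essentially the paper's argument: both rewrite $\comp(B \cup \{p\})$ as $\comp(B) \cup \comp(p)$, use $\comp(p) \subseteq \comp(f)$ from the extension-completion property, and finish with the elementary split of a set by membership in $\comp(p)$. The only difference is bookkeeping: the paper factors out $\setminus \comp(B)$ first, while you write each child relative to $N^\#$. Your side remarks are also right: the condition $\comp(p) \not\subseteq \comp(B)$ is not needed here, and the same decomposition gives Lemma~\ref{lem:children-intersection}.
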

\begin{proof}
    Since $p \in \ext(f)$, we have $\comp(p) \subset \comp(f)$ by Property \ref{eq:ext-comp-property}. It follows that $\comp(p) \cup (\comp(f) \setminus \comp(p)) = \comp(f)$.
    Thus
    \begin{align*}
        N_1^\# \cup N_2^\#
        &= \left[ \comp(p) \setminus \comp(B) \right] \\ & \quad\quad\quad \cup \left[ \comp(f) \setminus \comp(B \cup \{p\}) \right] \\
        &= \left[ \comp(p) \cup (\comp(f) \setminus \comp(p)) \right] \setminus \comp(B) \\
        &= \comp(f) \setminus \comp(B)
        = N^\#.
    \end{align*} 
\end{proof}

\begin{lemma}
    \label{lem:children-intersection}
    Let $N_1 = \node{p}{B}$ and $N_2 = \node{f}{B \cup \{p\}}$ be the child nodes spawned by applying the \branch{} rule to a node $N = \node{f}{B}$ in the proof system $\prfsys(\phi, \ext, \comp)$. Then $N_1^\# \cap N_2^\# = \emptyset$.
\end{lemma}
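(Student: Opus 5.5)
The plan is to unfold the two concretizations and observe that $N_1^\#$ sits inside $\comp(p)$, which is exactly what $N_2$ bans. By definition,
\begin{align*}
    N_1^\# &= \comp(p) \setminus \comp(B), \\
    N_2^\# &= \comp(f) \setminus \comp(B \cup \{p\}).
\end{align*}
The first step is to rewrite the banned collection of $N_2$ with the definition $\comp(B) = \bigcup_{b \in B} \comp(b)$, which distributes over the union of banned sets:
\begin{equation*}
    \comp(B \cup \{p\}) = \comp(B) \cup \comp(p).
\end{equation*}
Hence $N_2^\# = \comp(f) \setminus (\comp(B) \cup \comp(p))$, and in particular $N_2^\# \cap \comp(p) = \emptyset$.

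The second step finishes the argument. Since $N_1^\# \subseteq \comp(p)$, any $g \in N_1^\# \cap N_2^\#$ would lie in $\comp(p)$, because it belongs to $N_1^\#$. It would also lie outside $\comp(p)$, because it belongs to $N_2^\#$. That is a contradiction, so the intersection is empty.

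I do not expect a real obstacle. The only point needing care is the identity for $\comp(B \cup \{p\})$, which follows directly from unfolding the union over $b \in B \cup \{p\}$. Unlike Lemma~\ref{lem:children-union}, this argument does not need Property~\ref{eq:ext-comp-property} or the side condition $\comp(p) \not\subseteq \comp(B)$ of \branch{}; disjointness is purely set-theoretic.
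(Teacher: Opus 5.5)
Your proof is correct and is essentially the paper's argument: the paper likewise expands $\comp(B \cup \{p\})$ as $\comp(B) \cup \comp(p)$ and reduces the intersection to $\left[\comp(p) \cap (\comp(f) \setminus \comp(p))\right] \setminus \comp(B) = \emptyset$, which is your observation that $N_1^\# \subseteq \comp(p)$ while $N_2^\#$ avoids $\comp(p)$. You are also right that this needs neither Property~\ref{eq:ext-comp-property} nor the side condition of \branch{}, and the paper's computation does not use them either.
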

\begin{proof}
    We have
    \begin{align*}
        N_1^\# \cap N_2^\# &= \left[ \comp(p) \setminus \comp(B) \right]  \\ & \quad\quad\quad \cap \left[ \comp(f) \setminus \comp(B \cup \{p\}) \right] \\
        &= \left[ \comp(p) \cap (\comp(f) \setminus \comp(p)) \right] \setminus \comp(B) \\
        &= \emptyset \setminus \comp(B)
        = \emptyset.
    \end{align*}
\end{proof}

\begin{restatable}{theorem}{thm:symbreak-preserves-counterexamples}
    Suppose $\Nodes_0$ is a collection of nodes, and let $\Nodes_k$ be the resulting collection of nodes after $k \in \nat$ applications of the \branch{} and \prune{} rules of a proof system $\prfsys(\phi, \ext, \comp)$ starting from state $\Nodes_0$. If $\phi$ is a predicate on families, then $\phi_\exists(\mathcal N_k^\#) = \phi_\exists(\mathcal N_0^\#)$.
\end{restatable}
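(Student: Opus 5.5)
We argue by induction on $k$, reducing everything to a single-step invariance claim: if $\Nodes'$ is obtained from $\Nodes$ by one application of \branch{} or \prune{}, then $\phi_\exists(\Nodes'^\#) = \phi_\exists(\Nodes^\#)$. The base case $k = 0$ is trivial. For the inductive step, write $\Nodes_{k+1}$ as the result of one rule application to $\Nodes_k$; the single-step claim then gives $\phi_\exists(\Nodes_{k+1}^\#) = \phi_\exists(\Nodes_k^\#)$, and this equals $\phi_\exists(\Nodes_0^\#)$ by the induction hypothesis. A useful preliminary observation is that $\phi_\exists$ distributes over unions: $\phi_\exists(X \cup Y) = \phi_\exists(X) \lor \phi_\exists(Y)$. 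This is immediate from the definition of $\phi_\exists$ as an existential over members.

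\textbf{The \branch{} case.} Let $N = \node{f}{B}$ be replaced by $N_1 = \node{p}{B}$ and $N_2 = \node{f}{B \cup \{p\}}$. Write $R = (\Nodes \setminus \{N\})^\#$. Then $\Nodes^\# = R \cup N^\#$. For the new state we have $\Nodes'^\# = R \cup N_1^\# \cup N_2^\#$, since $\Nodes' = (\Nodes \setminus \{N\}) \cup \{N_1, N_2\}$ and concretization of a collection is a union. By Lemma \ref{lem:children-union}, $N_1^\# \cup N_2^\# = N^\#$, so $\Nodes'^\# = \Nodes^\#$ as sets and $\phi_\exists$ agrees on them.

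Two subtleties need a sentence each.
\begin{itemize}
\item If $N_1$ or $N_2$ happens to coincide with a node already in $\Nodes \setminus \{N\}$, the set-union of nodes merges them. This is harmless, because concretization is computed as a union of sets.
\item Lemma \ref{lem:children-intersection} is not needed here; it is only relevant to disjointness.
\end{itemize}

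\textbf{The \prune{} case.} Let $N$ be removed with $\phi_\exists(N^\#) = false$, and let $\Nodes' = \Nodes \setminus \{N\}$. Then $\Nodes^\# = \Nodes'^\# \cup N^\#$, so by distributivity
\[
\phi_\exists(\Nodes^\#) = \phi_\exists(\Nodes'^\#) \lor false = \phi_\exists(\Nodes'^\#).
\]
Note that this case does not give set equality, since pruned families may disappear. Only the truth value of $\phi_\exists$ is preserved, which is exactly why the theorem is stated in terms of $\phi_\exists$.

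\textbf{Main obstacle.} None of the steps is deep. The main care point is bookkeeping in the \branch{} case: justifying that $(\Nodes \setminus \{N\})^\# \cup N^\#$ equals $\Nodes^\#$ even though $N$'s concretization may overlap other nodes' concretizations. Both identities hold purely because concretization is a union over nodes, so no disjointness assumption is required.
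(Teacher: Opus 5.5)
Your proposal is correct and follows essentially the same route as the paper: induction on $k$ reduced to single-step invariance, with Lemma~\ref{lem:children-union} giving $\Nodes_{k+1}^\# = \Nodes_k^\#$ in the \branch{} case, and $\phi_\exists$ distributing over $\Nodes_k^\# = N^\# \cup \Nodes_{k+1}^\#$ in the \prune{} case. Your added remarks, on merged nodes and on not needing Lemma~\ref{lem:children-intersection} or any disjointness, are accurate but do not change the argument.
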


\begin{restatable}{theorem}{thm:symbreak-preserves-disjoint}
    Suppose $\Nodes_0$ is a collection of nodes, and let $\Nodes_k$ be the resulting collection of nodes after $k \in \nat$ applications of the \branch{} and \prune{} rules of a proof system $\prfsys(\phi, \ext, \comp)$ starting from state $\Nodes_0$. If the nodes in $\Nodes_0$ have pairwise disjoint concretizations, then the nodes in $\Nodes_k$ have pairwise disjoint concretizations.
\end{restatable}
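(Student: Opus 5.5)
The proof is by induction on $k$. For $k=0$ there is nothing to show. Assume the nodes of $\Nodes_k$ have pairwise disjoint concretizations, and let $\Nodes_{k+1}$ be obtained from $\Nodes_k$ by one rule application. It then suffices to show that a single application of \branch{} or \prune{} preserves pairwise disjointness, and we treat the two rules separately.

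\emph{Case \prune{}.} Here $\Nodes_{k+1} = \Nodes_k \setminus \{N\}$. Every pair of distinct nodes in $\Nodes_{k+1}$ is already a pair of distinct nodes in $\Nodes_k$. Their concretizations are therefore disjoint by the induction hypothesis.

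\emph{Case \branch{}.} Here $\Nodes_{k+1} = (\Nodes_k \setminus \{N\}) \cup \{N_1, N_2\}$, where $N = \node{f}{B}$, $N_1 = \node{p}{B}$ and $N_2 = \node{f}{B\cup\{p\}}$. Take two distinct nodes $M, M' \in \Nodes_{k+1}$; there are three sub-cases.
\begin{enumerate}
\item If neither is a child, both lie in $\Nodes_k$, and the induction hypothesis applies.
\item If $\{M, M'\} = \{N_1, N_2\}$, then $N_1^\# \cap N_2^\# = \emptyset$ by Lemma~\ref{lem:children-intersection}.
\item If exactly one of them, say $M$, is a child and $M' \in \Nodes_k \setminus \{N\}$, then Lemma~\ref{lem:children-union} gives $M^\# \subseteq N_1^\# \cup N_2^\# = N^\#$. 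Since $M' \neq N$ are distinct nodes of $\Nodes_k$, the induction hypothesis gives $N^\# \cap M'^\# = \emptyset$, and hence $M^\# \cap M'^\# = \emptyset$.
\end{enumerate}

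The main obstacle is bookkeeping about node identity, since $\Nodes$ is a \emph{set} of nodes. If a child coincides syntactically with a node already in $\Nodes_k \setminus \{N\}$, the two are merged into one element, and "distinct nodes" must be read accordingly.
\begin{itemize}
\item A child equal to some $M' \in \Nodes_k \setminus \{N\}$ is harmless when its concretization is empty. Otherwise it would give $\emptyset \neq M'^\# \subseteq N^\#$, contradicting disjointness from $N$.
\item If $N_1 = N_2$ as pairs, this would force $p = f$, which is impossible because the side condition $\comp(p) \subset \comp(f)$ in property~\eqref{eq:ext-comp-property} is a strict inclusion.
\end{itemize}
These remarks ensure that each pair considered above is genuinely a pair of distinct elements of the relevant collection, so the three sub-cases cover everything and the induction goes through.
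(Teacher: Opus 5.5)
Your proof is correct and matches the paper's argument. It runs by induction on $k$: the \prune{} case is immediate, and in the \branch{} case Lemma~\ref{lem:children-union} places both children inside $N^\#$, so they are disjoint from every other node, while Lemma~\ref{lem:children-intersection} makes the two children disjoint from each other. Your closing remarks about nodes coinciding as pairs are correct but unnecessary, since your three sub-cases already cover every pair of distinct elements of $\Nodes_{k+1}$ however the set merges.
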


The final major result is Theorem \ref{thm:symbreak-terminates}, which says that one cannot apply the inference rules of $\prfsys$ indefinitely. In particular, any decision procedure that iteratively applies the rules until it no longer can is guaranteed to terminate. The prune rule will inevitably run out of targets unless we can keep adding nodes using the branch rule, so we hope that there is some case in which we no longer branch. The following lemma suggests one such scenario.

\begin{restatable}{lemma}{lem:no-branch-on-empty}
    In any proof system $\prfsys(\phi, \ext, \comp)$, the \branch{} rule cannot be applied to nodes with empty concretizations. 
\end{restatable}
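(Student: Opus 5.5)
The plan is to argue by contraposition: suppose \branch{} is applied to a node $N = \node{f}{B}$, and show that $N^\#$ must then be nonempty. The only side condition of \branch{} that is relevant is $\comp(p) \not\subseteq \comp(B)$ for the chosen $p \in \ext(f)$. The whole argument consists in turning this condition into a witness family inside $N^\#$.

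The first step extracts a witness from the side condition. Since $\comp(p) \not\subseteq \comp(B)$, there is some family $g \in \comp(p)$ with $g \notin \comp(B)$. The second step shows that this $g$ also lies in $\comp(f)$. Because $p \in \ext(f)$, Property~\ref{eq:ext-comp-property} gives $\comp(p) \subset \comp(f)$, so $g \in \comp(f)$. The third step combines the two facts: $g \in \comp(f) \setminus \comp(B) = N^\#$. Hence $N^\# \neq \emptyset$, which is exactly the contrapositive of the claim.

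Equivalently, one can read this off from Lemma~\ref{lem:children-union}: the child $N_1 = \node{p}{B}$ has $N_1^\# = \comp(p) \setminus \comp(B) \neq \emptyset$ by the side condition, and $N_1^\# \subseteq N^\#$. This uses only the definition of concretization and the extension-completion property, so it holds for every $\prfsys(\phi, \ext, \comp)$ regardless of $\phi$.

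I expect no real obstacle. The only point needing care is to use the side condition $\comp(p) \not\subseteq \comp(B)$, rather than the condition $p \notin B$ or anything about $f$ alone. If $N^\#$ were empty, we would have $\comp(f) \subseteq \comp(B)$. Since $\comp(p) \subset \comp(f)$, this would force $\comp(p) \subseteq \comp(B)$, violating the premise of \branch{}.
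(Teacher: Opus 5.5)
Your proposal is correct and is the paper's argument run contrapositively: the paper assumes $N^\# = \emptyset$, derives $\comp(f) \subseteq \comp(B)$, and combines this with $\comp(p) \subset \comp(f)$ to contradict the side condition $\comp(p) \not\subseteq \comp(B)$. Your witness $g \in \comp(p) \setminus \comp(B)$ is the same reasoning, and so is the closing remark in your last paragraph.
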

\begin{proof}
    Suppose $N = \node{f}{B}$ is a node with $N^\# = \emptyset$. Suppose we wish to branch on $p \in \ext(f)$. We have
    \begin{equation}
        \comp(f) \setminus \comp(B) = N^\# = \emptyset
    \end{equation}
    so $\comp(f) \subseteq \comp(B)$. We also have $\comp(p) \subset \comp(f)$ by Property \ref{eq:ext-comp-property}. It follows that $\comp(p) \subset \comp(B)$, so we cannot apply the branch rule for this arbitrary choice of $p$, because the premise $\comp(p) \not\subseteq \comp(B)$ is not satisfied. Hence, we cannot apply the \branch{} rule.
\end{proof}

Intuitively, as we apply the $\branch$ rule, we more and more precisely specify the range of families represented by the child nodes, shrinking their concretizations until they become empty. Therefore, we want to show that children spawned by the branch rule are smaller, in some sense, than their parent. To this end, we define the \textit{magnitude} of a node $N = \node{f}{B}$ to be $\norm{N} = |\comp(f)||\Fam \setminus \comp(B)|$. The magnitude of a node should not be confused with the cardinality of its concretization, denoted $|N^\#|$. The actual values of nodes' magnitudes are in general only meaningful in comparison to one another. However, we have the important property that a magnitude of zero corresponds to an empty concretization.

\begin{restatable}{lemma}{lem:magnitude-zero-implies-empty}
    For any node $N = \node{f}{B}$, if $\norm{N} = 0$, then $N^\# = \emptyset$.
\end{restatable}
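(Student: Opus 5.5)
The plan is a direct case analysis on which factor of the product $\norm{N} = |\comp(f)|\,|\Fam \setminus \comp(B)|$ vanishes. Both factors are cardinalities of subcollections of the finite collection $\Fam = 2^{2^{[n]}}$, so they are natural numbers. Since their product is zero, at least one of them must be zero. It then suffices to show that either case forces $N^\# = \comp(f) \setminus \comp(B)$ to be empty.

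\emph{Case 1:} $|\comp(f)| = 0$. Then $\comp(f) = \emptyset$, and so $N^\# = \emptyset \setminus \comp(B) = \emptyset$ immediately.

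\emph{Case 2:} $|\Fam \setminus \comp(B)| = 0$. Then $\Fam \subseteq \comp(B)$. Here I need $\comp(f) \subseteq \Fam$, which holds because $\comp$ has signature $\Fam \to 2^\Fam$, so every completion is a family in $\Fam$. Chaining these gives $\comp(f) \subseteq \Fam \subseteq \comp(B)$, hence $\comp(f) \setminus \comp(B) = \emptyset$, i.e.\ $N^\# = \emptyset$.

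There is no genuine obstacle. The only point needing care is in Case 2: we must use that $\comp(B) = \bigcup_{b \in B} \comp(b)$ is itself a subcollection of $\Fam$, so that the vanishing of $|\Fam \setminus \comp(B)|$ means $\comp(B) = \Fam$. We must also use that all completions live in $\Fam$. Both facts follow from the typing of an extension-completion pair. Note that we do not need Property \ref{eq:ext-comp-property} here; it will matter only later, when showing that \branch{} strictly decreases magnitude.
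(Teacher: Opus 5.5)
Your proposal is correct and matches the paper's proof: you split on which factor of $\norm{N}$ vanishes, conclude immediately when $\comp(f)=\emptyset$, and otherwise chain $\comp(f) \subseteq \Fam \subseteq \comp(B)$. Your remark that $\comp(B)=\Fam$ says slightly more than the argument needs, since only $\Fam \subseteq \comp(B)$ is used, but this does no harm.
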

\begin{proof}
    Suppose $\norm{N} = 0$. Then at least one of $|\comp(f)| = 0$ and $|\Fam \setminus \comp(B)| = 0$ must be true. If the former is true, then
    \begin{equation}
        N^\# = \comp(f) \setminus \comp(B) = \emptyset \setminus \comp(B) = \emptyset
    \end{equation}
    If the latter is true, then we must have $\Fam \subseteq \comp(B)$. But then
    \begin{equation}
        \comp(f) \subseteq \Fam \subseteq \comp(B)
    \end{equation}
    so $N^\# = \comp(f) \setminus \comp(B) = \emptyset$. 
\end{proof}

Now we can show that child nodes have strictly smaller magnitudes than their parents. This will immediately allow us to prove that only finite sequences of rule applications are possible in $\prfsys$.

\begin{lemma}
    \label{lem:branch-decreases-magnitude}
    Let $N_1 = \node{p}{B}$ and $N_2 = \node{f}{B \cup \{p\}}$ be the child nodes spawned by applying the \branch{} rule to a node $N = \node{f}{B}$ in the proof system $\prfsys(\phi, \ext, \comp)$. Then $\norm{N_1} < \norm{N}$ and $\norm{N_2} < \norm{N}$.
\end{lemma}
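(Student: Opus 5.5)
The plan is to treat the two children separately, each time holding one factor of the magnitude fixed and showing that the other factor strictly shrinks. We first need both factors of $\norm{N}$ to be positive; otherwise no strict decrease is possible. Since \branch{} was applied to $N$, Lemma \ref{lem:no-branch-on-empty} gives $N^\# \neq \emptyset$. We also use that $\Fam$ is finite, so every cardinality involved is a natural number. Pick $g \in N^\# = \comp(f) \setminus \comp(B)$. Then $g \in \comp(f)$, so $|\comp(f)| \ge 1$. Also $g \in \Fam \setminus \comp(B)$, so $|\Fam \setminus \comp(B)| \ge 1$.

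For $N_1 = \node{p}{B}$, the second factor $|\Fam \setminus \comp(B)|$ is the same as in $\norm{N}$. Because $p \in \ext(f)$, Property \ref{eq:ext-comp-property} gives the strict inclusion $\comp(p) \subset \comp(f)$. Finiteness then yields $|\comp(p)| < |\comp(f)|$. Multiplying by the positive factor $|\Fam \setminus \comp(B)|$ gives $\norm{N_1} < \norm{N}$.

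For $N_2 = \node{f}{B \cup \{p\}}$, the first factor $|\comp(f)|$ is unchanged. By definition, $\comp(B \cup \{p\}) = \comp(B) \cup \comp(p)$, so
\begin{equation*}
\Fam \setminus \comp(B \cup \{p\}) = (\Fam \setminus \comp(B)) \setminus \comp(p).
\end{equation*}
The premise $\comp(p) \not\subseteq \comp(B)$ of \branch{} supplies some $h \in \comp(p) \setminus \comp(B)$. This $h$ lies in $\Fam \setminus \comp(B)$ but is removed by the set difference above. Hence $|\Fam \setminus \comp(B \cup \{p\})| < |\Fam \setminus \comp(B)|$. Multiplying by the positive factor $|\comp(f)|$ gives $\norm{N_2} < \norm{N}$.

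The main obstacle is the positivity of the unchanged factor in each case: if it were zero, the product would stay at zero and the inequality would not be strict. This is why the argument starts by invoking Lemma \ref{lem:no-branch-on-empty} to obtain a witness $g \in N^\#$. The rest is routine finite-cardinality reasoning.
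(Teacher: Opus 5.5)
Your proof is correct and follows the paper's argument essentially step for step. You get positivity of both factors from Lemma~\ref{lem:no-branch-on-empty}, exhibiting a witness $g \in N^\#$ directly where the paper passes through Lemma~\ref{lem:magnitude-zero-implies-empty}, and then use the strict inclusion $\comp(p) \subset \comp(f)$ for $N_1$ and the premise $\comp(p) \not\subseteq \comp(B)$ for $N_2$; your witness-based proper-subset step is just a set-level version of the paper's cardinality subtraction $|\Fam| - |\comp(B \cup \{p\})| < |\Fam| - |\comp(B)|$.
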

\begin{proof}
    By Lemma \ref{lem:no-branch-on-empty}, since we applied the \branch{} rule, we must have $N^\# \neq \emptyset$. Therefore, by Lemma \ref{lem:magnitude-zero-implies-empty}, we have $\norm{N} \neq 0$, so
    \begin{align}
        |\comp(f)| &> 0 \label{eq:lem-bdm-1}, \\
        |\Fam \setminus \comp(B)| &> 0. \label{eq:lem-bdm-2}
    \end{align}
    
    Since $p \in \ext(f)$, we have $\comp(p) \subset \comp(f)$ by Property \ref{eq:ext-comp-property}, so $|\comp(p)| < |\comp(f)|$. Consequently, using (\ref{eq:lem-bdm-2}), we have 
    \begin{align*}
        \norm{N_1} & = |\comp(p)|\cdot |\Fam \setminus \comp(B)| \\
        & < |\comp(f)|\cdot |\Fam \setminus \comp(B)| = \norm{N}.
    \end{align*}
    
    Moreover, since we applied the \branch{} rule, we must have that $\comp(p) \not\subseteq \comp(B)$. Hence, $|\comp(B \cup \{p\})| = |\comp(B) \cup \comp(p)| > |\comp(B)|$. Therefore,
    \begin{align*}
        |\Fam \setminus \comp(B \cup \{p\})|
        &= |\Fam| - |\comp(B \cup \{p\})| \\
        &< |\Fam| - |\comp(B)| \\
        &= |\Fam \setminus \comp(B)|.
    \end{align*}
    Using (\ref{eq:lem-bdm-1}), we find that
    \begin{align*}
        \norm{N_2}
       &= |\comp(f)|\cdot|\Fam \setminus \comp(B \cup \{p\})|\\
        &< |\comp(f)|\cdot |\Fam \setminus \comp(B)|
        = \norm{N}.
    \end{align*}
\end{proof}

\begin{restatable}{theorem}{thm:symbreak-terminates}
    There are no infinite sequences of rule applications in any proof system $\prfsys(\phi, \ext, \comp)$.
\end{restatable}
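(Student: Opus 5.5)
The plan is a multiset (Dershowitz--Manna) ordering argument on node magnitudes. The first step is to observe that magnitudes are natural numbers bounded by $|\Fam|^2$. This holds because $\Fam = 2^{2^{[n]}}$ is finite, so $\norm{N} = |\comp(f)|\,|\Fam \setminus \comp(B)| \in \{0, 1, \dots, |\Fam|^2\}$.

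To each state $\Nodes$ I will associate the finite multiset $M(\Nodes) = \{\!\{ \norm{N} : N \in \Nodes \}\!\}$. Since $\Nodes$ is a set, each node contributes once. I assume that $\Nodes_0$ is finite; this is implicit, since every node application only adds finitely many nodes. Then every $M(\Nodes_k)$ is a finite multiset over $\nat$.

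Next, I show that each rule application strictly decreases $M$ in the multiset extension of $<$ on $\nat$.

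For \prune{}, one element is removed. A proper sub-multiset is strictly smaller.

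For \branch{}, the node $N$ is removed and the nodes $\node{p}{B}$ and $\node{f}{B \cup \{p\}}$ are added. By Lemma \ref{lem:branch-decreases-magnitude}, both have magnitude strictly less than $\norm{N}$. Replacing one element by finitely many strictly smaller ones is exactly a multiset decrease.

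One subtlety needs care: a child may already belong to $\Nodes$, in which case the set union absorbs it. This only removes elements from what would have been added, so the new multiset is still bounded by the replacement multiset and the decrease remains strict.

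Since the multiset ordering over a well-founded order is well-founded, no infinite sequence of rule applications exists. If I want to avoid citing Dershowitz--Manna, I would use an explicit measure instead. Let $K = |\Fam|^2 + 1$ and define $\mu(\Nodes) = \sum_{N \in \Nodes} 3^{\norm{N}}$. \prune{} decreases $\mu$ by at least $1$. For \branch{}, two terms of size at most $3^{\norm{N}-1}$ replace one term $3^{\norm{N}}$, and $2\cdot 3^{m-1} < 3^m$, so $\mu$ strictly decreases. Since $\mu$ takes values in $\nat$, termination follows.

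The main obstacle is only bookkeeping: handling the set semantics of $\Nodes$ (possible coincidence of a child with an existing node, or of the two children with each other) and making explicit the finiteness of $\Nodes_0$. The substantive content is already supplied by Lemma \ref{lem:branch-decreases-magnitude}.
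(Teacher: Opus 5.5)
Your proof is correct and is essentially the paper's argument. The paper's lexicographically ordered vector of node counts per magnitude $(x_m,\dots,x_0)$ is exactly the multiset order on magnitudes written out explicitly, and both proofs rest on Lemma~\ref{lem:branch-decreases-magnitude}. Your explicit measure $\sum_{N} 3^{\norm{N}}$ and your treatment of children already present in $\Nodes$ are sound, and slightly more careful than the paper, which glosses over the latter. Finiteness of $\Nodes_0$ needs no assumption (and your stated justification for it does not establish it): with $\Fam$ finite there are only finitely many nodes $\node{f}{B}$ in total.
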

\begin{proof}
    (Sketch: See the Appendix for a full proof.) Proof by contradiction. Each \branch{} application reduces the number of high magnitude nodes in favor of lower magnitude nodes, while applying \prune{} can never recover lost magnitude. Inevitably, one runs out of valid (positive magnitude) targets to branch on and subsequently out of nodes to prune.
\end{proof}

\subsection{Breaking Symmetry} 

\newcommand{\extiso}{\ext_{\iso}}
\newcommand{\compiso}{\comp_{\iso}}

With the most general results established, we turn towards a specific instantiation of \prfsys{} with an extension-completion pair that will be suitable for breaking symmetry:

\begin{align}
    \extiso(f) &= \set{f \cup \{s\}}{s \in 2^{[u]} \setminus f} \\
    \compiso(f) &= \set{g \in \Fam}{\exists f' \in [f].\, g \supseteq f'} \label{eq:compiso}
\end{align}

The definition of $\extiso$ the same as the $\ext$ we defined in an earlier example. In contrast, $\compiso$ explicitly references isomorphism, defining the completion of a family $f$ to consist of any family that contains $f$ \textit{or a family isomorphic to $f$} (compare the definition of $\compiso$ with Equation \ref{eq:comp-example}). What this means for the execution of \prfsys{} is that when we ban the completion of a family, we are able to eliminate many families with entirely different compositions that are nonetheless isomorphic (and thus redundant, when the family-predicate $\phi$ is isomorphism-invariant).

\begin{example}
Let $N = \node{\emptyset}{\emptyset}$. Branching gives $N_1 = \node{\{\{1\}\}}{\emptyset}$ and $N_2 = \node{\emptyset}{\{\{\{1\}\}\}}$. Note that $[\{\{1\}\}] = \set{\{\{i\}\}}{i \in [n]}$, so $\compiso(\{\{1\}\})$ includes all families that contain \textit{any} singleton set. Hence, $N_2^\# = \compiso(\emptyset) \setminus \compiso(\{\{1\}\})$ denotes all families that do not contain a singleton set. Node $N_2$ now does not admit branching on, say, $\{\{2\}\}$, since $\compiso(\{\{2\}\}) = \compiso(\{\{1\}\})$ (violating the $\comp(p) \not\subseteq \comp(B)$ premise). Thus, banning $\{\{1\}\}$ also banned $\{\{2\}\}$. This prevents us from duplicating work, as we already consider the isomorphic case in $N_1$.
\end{example}

In this subsection, we discuss the implications of specifying this extension-completion pair on the properties of \prfsys{}. First, we introduce two quick but important properties of $\compiso$, and we apply them to show that $\extiso$ and $\compiso$ do form an extension-completion pair as we have suggested.

\begin{restatable}{lemma}{lem:f-in-compiso-f}
    For any family $f \in \Fam$, we have $f \in \compiso(f)$.
\end{restatable}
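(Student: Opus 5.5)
The statement follows directly from the definition of $\compiso$ in Equation \ref{eq:compiso}. To show $f \in \compiso(f)$, we need two things: first, that $f \in \Fam$; second, that there is some $f' \in \eqc f$ with $f \supseteq f'$. The first holds by hypothesis. For the second, the natural choice is the witness $f' = f$ itself.

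The only thing to verify is that $f \in \eqc f$, i.e.\ that $f \iso f$. The paper has already noted that isomorphism is an equivalence relation, so reflexivity suffices. We can also check it directly. Take the identity permutation $\pi = \mathrm{id}_{[n]}$. Then $\pi(s) = s$ for every $s \in 2^{[n]}$, hence
\[
\pi(f) = \set{\pi(s)}{s \in f} = f,
\]
which gives $f \iso f$.

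Having $f \in \eqc f$, the containment $f \supseteq f$ is trivial. So $f$ satisfies the defining condition of $\compiso(f)$ with witness $f' = f$, and therefore $f \in \compiso(f)$. I expect no real obstacle here. The only care needed is to make explicit that the equivalence class $\eqc f$ contains $f$, which is exactly what reflexivity of $\iso$ provides.
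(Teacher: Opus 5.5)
Your proof is correct and matches the paper's: the paper also uses the witness $f' = f$, noting that $f \in [f]$ and $f \supseteq f$. Your check via the identity permutation simply spells out the reflexivity of $\iso$ that the paper uses without comment.
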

\begin{proof}
    This follows from $f \in [f]$ and $f \supseteq f$. 
\end{proof}

\begin{restatable}{lemma}{lem:compiso-f-in-compiso-g}
    Suppose $f, g \in \Fam$ with $f \in \compiso(g)$. Then $\compiso(f) \subseteq \compiso(g)$.
\end{restatable}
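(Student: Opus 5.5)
The plan is to unfold the definition of $\compiso$ twice and compose the two witnessing permutations. Take an arbitrary $h \in \compiso(f)$; we must show $h \in \compiso(g)$. By Equation \ref{eq:compiso}, there is some $f' \in \eqc{f}$ with $h \supseteq f'$. Likewise, the hypothesis $f \in \compiso(g)$ gives some $g' \in \eqc{g}$ with $f \supseteq g'$. The goal is a family $g'' \in \eqc{g}$ with $h \supseteq g''$.

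Since $f' \iso f$, there is a permutation $\pi$ of $[n]$ with $\pi(f) = f'$. We will take $g'' = \pi(g')$ and check two things.

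\begin{itemize}
\item \emph{Isomorphism.} Since $g' \iso g$ via some permutation $\sigma$ with $\sigma(g) = g'$, we have $\pi\circ\sigma(g) = \pi(g') = g''$. Thus $g'' \iso g$, using that $\iso$ is an equivalence relation, in particular transitive.
\item \emph{Containment.} The action of a permutation on families is monotone with respect to inclusion: if $a \subseteq b$ then $\pi(a) = \set{\pi(s)}{s \in a} \subseteq \set{\pi(s)}{s \in b} = \pi(b)$. Applying this to $g' \subseteq f$ gives $g'' = \pi(g') \subseteq \pi(f) = f' \subseteq h$.
\end{itemize}

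Together these give $h \in \compiso(g)$, and since $h$ was arbitrary, $\compiso(f) \subseteq \compiso(g)$.

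The only point needing care is the direction of the transport. The permutation must carry $f$ onto the copy $f'$ that sits inside $h$, and it is then applied to the sub-family $g'$ of $f$. Done the other way round, it would move $h$ rather than $g'$. Everything else is a direct appeal to the definitions and to the fact that isomorphism is an equivalence relation.
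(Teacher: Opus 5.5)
Your proof is correct and is essentially the paper's argument: both push the witness $g' \subseteq f$ forward along the permutation $\pi_f$ that carries $f$ onto $f' \subseteq h$, giving $h \supseteq f' = \pi_f(f) \supseteq \pi_f(g') = \pi_f(\pi_g(g)) \in \eqc{g}$. Your explicit remark that the permutation action is monotone under inclusion spells out a step the paper uses without comment.
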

\begin{proof}
    We know from $f \in \compiso(g)$ that $f \supseteq g'$ for some $g' \in [g]$ (so there is a permutation $\pi_g$ that sends $g$ to $g'$). Suppose $h \in \compiso(f)$. Then $h \supseteq f'$ for some $f' \in [f]$ (so there is a permutation $\pi_f$ that sends $f$ to $f'$). Now we have 
    \begin{equation}
        h \supseteq f' = \pi_f(f) \supseteq \pi_f(g') =\pi_f(\pi_g(g)).
    \end{equation}
    Moreover, $\pi_f \circ \pi_g$ is a permutation, so $\pi_f(\pi_g(g)) \in [g]$, and thus $h \in \compiso(g)$. 
\end{proof}

\begin{theorem}
    \label{thm:extiso-compiso-satisfy-ext-comp}
    The functions $\extiso$ and $\compiso$ are an extension-completion pair.
\end{theorem}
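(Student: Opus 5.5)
The plan is to verify Property~\ref{eq:ext-comp-property} directly for $\extiso$ and $\compiso$. Fix $f \in \Fam$ and $g \in \extiso(f)$, so $g = f \cup \{s\}$ for some set $s \notin f$. I read the ground set in the definition of $\extiso$ as $[n]$. I must show $\compiso(g) \subset \compiso(f)$, which splits into inclusion and strictness.

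\emph{Inclusion.} Since $g \supseteq f$ and $f \in [f]$, the definition of $\compiso$ with $f' = f$ gives $g \in \compiso(f)$. Lemma~\ref{lem:compiso-f-in-compiso-g}, applied with the roles ``$f \in \compiso(g)$'' instantiated as $g \in \compiso(f)$, then yields $\compiso(g) \subseteq \compiso(f)$.

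\emph{Strictness.} I will use $f$ itself as the witness. By Lemma~\ref{lem:f-in-compiso-f}, $f \in \compiso(f)$, so it remains to show $f \notin \compiso(g)$. Suppose instead that $f \supseteq g'$ for some $g' \in [g]$, say $g' = \pi(g)$ for a permutation $\pi$ of $[n]$. Since $\pi$ is a bijection on $[n]$, it induces an injective map on $2^{[n]}$. Hence $|\pi(g)| = |g| = |f| + 1$, using $s \notin f$. But $g' \subseteq f$ forces $|g'| \le |f|$, a contradiction. Therefore $f \in \compiso(f) \setminus \compiso(g)$, and the inclusion is strict.

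The only step needing care is strictness. For the plain $\comp$, strictness is immediate, but here the isomorphism closure could in principle let $g$ be ``absorbed'' into $f$ by some permutation. The cardinality argument rules this out, and it relies on two facts: all families are finite, and permutations of $[n]$ act injectively on sets, so they preserve family sizes. Everything else follows from the two preceding lemmas.
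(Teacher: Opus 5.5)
Your proof is correct and is essentially the paper's argument: inclusion comes from $g \supseteq f$ giving $g \in \compiso(f)$ and then Lemma~\ref{lem:compiso-f-in-compiso-g}, and strictness comes from the witness $f \in \compiso(f)$ with $f \notin \compiso(g)$, via $|g'| = |g| = |f|+1 > |f|$. Your reading of $[u]$ as $[n]$ in the definition of $\extiso$ is the intended one, and your remark that permutations act injectively on $2^{[n]}$ (so they preserve family sizes) makes explicit a step the paper leaves implicit.
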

\begin{proof}
    Suppose $f \in \Fam$ and $g \in \extiso(f)$. Then $g = f \cup \{s\}$ for some $s \in 2^{[n]} \setminus f$. In particular, $g \supsetneq f$, so $g \in \compiso(f)$. By Lemma \ref{lem:compiso-f-in-compiso-g}, that means $\compiso(g) \subseteq \compiso(f)$.

    To show that $\compiso(g) \neq \compiso(f)$, we will note that $f \in \compiso(f)$ by Lemma \ref{lem:f-in-compiso-f}, and show that $f \notin \compiso(g)$. Suppose $f \in \compiso(g)$. Then $f \supseteq g'$ for some $g' \in [g]$. But
    \begin{equation}
        |g'| = |g| = |f \cup \{s\}| > |f|
    \end{equation}
    because $s \notin f$. Therefore, $|f| \geq |g'| > |f|$, a contradiction. 
\end{proof}

Now that we have Theorem \ref{thm:extiso-compiso-satisfy-ext-comp}, we are justified in considering instantiations of \prfsys{} of the form $\prfsys(\phi, \extiso, \compiso)$. It is expected that such instantiations will additionally require that $\phi$ be isomorphism-invariant, but this constraint is not necessary for the results we develop here.

The main result we aim for is a strengthening of Lemma \ref{lem:branch-decreases-magnitude}: when breaking symmetry, branching not only decreases the (rather contrived) \textit{magnitude} of a node, but directly decreases the \textit{cardinality} of a node's concretization. All we require before proving this is a small lemma:

\begin{lemma}
    \label{lem:no-branch-if-f-banned}
    In a proof system $\prfsys(\phi, \extiso, \compiso)$, the \branch{} rule cannot be applied to nodes $N = \node{f}{B}$ where $f \in \compiso(B)$.
\end{lemma}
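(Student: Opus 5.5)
The plan is to show that if $f \in \compiso(B)$, then for every candidate $p \in \extiso(f)$ the premise $\compiso(p) \not\subseteq \compiso(B)$ of \branch{} fails. This mirrors the proof of the lemma that \branch{} cannot be applied to nodes with empty concretizations. In fact the hypothesis already forces $N^\# = \emptyset$, so that lemma could be invoked directly. I would nevertheless write the short argument explicitly.

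First, unfold $f \in \compiso(B) = \bigcup_{b \in B} \compiso(b)$ to obtain some $b \in B$ with $f \in \compiso(b)$. Applying Lemma~\ref{lem:compiso-f-in-compiso-g} with $g = b$ then gives $\compiso(f) \subseteq \compiso(b) \subseteq \compiso(B)$.

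Next, fix an arbitrary $p \in \extiso(f)$. By Theorem~\ref{thm:extiso-compiso-satisfy-ext-comp}, the pair $\extiso, \compiso$ satisfies Property~\ref{eq:ext-comp-property}, so $\compiso(p) \subset \compiso(f)$. Chaining the inclusions yields $\compiso(p) \subseteq \compiso(B)$. Hence the side condition $\compiso(p) \not\subseteq \compiso(B)$ fails for every admissible $p$, and \branch{} cannot be applied to $N$.

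Alternatively, one can note that $\compiso(f) \subseteq \compiso(B)$ gives $N^\# = \compiso(f) \setminus \compiso(B) = \emptyset$ and then cite the earlier lemma on empty concretizations. There is no real obstacle here. The only care needed is the unfolding of the union in $\compiso(B)$ to a single witness $b$, so that Lemma~\ref{lem:compiso-f-in-compiso-g}, which is stated for a single family, applies.
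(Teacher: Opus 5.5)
Your proof is correct and takes the same approach as the paper. Both unfold $f \in \compiso(B)$ to a single witness $b \in B$ and apply Lemma~\ref{lem:compiso-f-in-compiso-g} to get $\compiso(f) \subseteq \compiso(B)$; the paper then concludes $N^\# = \emptyset$ and cites Lemma~\ref{lem:no-branch-on-empty}, while your explicit version simply inlines that lemma's own argument.
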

\begin{proof}
    Suppose $f \in \compiso(B)$. Then $f \in \compiso(b)$ for some $b \in B$. By Lemma \ref{lem:compiso-f-in-compiso-g}, that means $\compiso(f) \subseteq \compiso(b) \subseteq \compiso(B)$. Hence $N^\# = \compiso(f) \setminus \compiso(B) = \emptyset$. By Lemma \ref{lem:no-branch-on-empty}, we cannot apply the \branch{} rule to $N$. 
\end{proof}

\begin{theorem}
    \label{thm:branch-decreases-card}
    Let $N_1 = \node{p}{B}$ and $N_2 = \node{f}{B \cup \{p\}}$ be nodes spawned by applying the \branch{} rule to a node $N = \node{f}{B}$ in a proof system $\prfsys(\phi, \extiso, \compiso)$. Then $|N_1^\#|, |N_2^\#| < |N^\#|$.
\end{theorem}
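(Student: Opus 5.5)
By Lemma \ref{lem:children-union} and Lemma \ref{lem:children-intersection}, $N^\#$ is the disjoint union of $N_1^\#$ and $N_2^\#$. Hence $|N^\#| = |N_1^\#| + |N_2^\#|$, and all sets here are finite because $\Fam$ is finite. It therefore suffices to show that both children have nonempty concretizations. Then each child's cardinality is strictly less than the sum, which gives $|N_1^\#|, |N_2^\#| < |N^\#|$.

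\emph{Nonemptiness of $N_1^\#$.} I will exhibit $p$ itself as a witness. By Lemma \ref{lem:f-in-compiso-f}, $p \in \compiso(p)$. It remains to show $p \notin \compiso(B)$. Suppose instead that $p \in \compiso(b)$ for some $b \in B$. Then Lemma \ref{lem:compiso-f-in-compiso-g} gives $\compiso(p) \subseteq \compiso(b) \subseteq \compiso(B)$. This contradicts the premise $\compiso(p) \not\subseteq \compiso(B)$ of the \branch{} rule. So $p \in N_1^\# = \compiso(p) \setminus \compiso(B)$.

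\emph{Nonemptiness of $N_2^\#$.} The witness here will be $f$. By Lemma \ref{lem:f-in-compiso-f}, $f \in \compiso(f)$. By Lemma \ref{lem:no-branch-if-f-banned}, since \branch{} was applied to $N$, we have $f \notin \compiso(B)$. It remains to show $f \notin \compiso(p)$. This is exactly the cardinality argument in the proof of Theorem \ref{thm:extiso-compiso-satisfy-ext-comp}. Since $p = f \cup \{s\}$ with $s \notin f$, any $p' \in [p]$ has $|p'| = |p| > |f|$, so $f \supseteq p'$ is impossible. Therefore $f \notin \compiso(B) \cup \compiso(p) = \compiso(B \cup \{p\})$, and so $f \in N_2^\#$.

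There is no real obstacle. The only point needing care is choosing the right witnesses, namely $p$ for $N_1$ and $f$ for $N_2$, and noticing that the \branch{} premise together with Lemma \ref{lem:no-branch-if-f-banned} is what rules out these witnesses being banned. That is also where the symmetry-specific lemmas (\ref{lem:f-in-compiso-f} and \ref{lem:compiso-f-in-compiso-g}) enter, which is why the result is stated for the $\compiso$ instantiation and not for general extension-completion pairs.
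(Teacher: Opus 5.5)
Your proof is correct and follows essentially the same route as the paper: both use Lemmas \ref{lem:children-union} and \ref{lem:children-intersection} to get $|N_1^\#| + |N_2^\#| = |N^\#|$, and both use $f$ as the witness on the $N_2$ side via Lemma \ref{lem:no-branch-if-f-banned} and $f \notin \compiso(p)$. The one cosmetic difference is that the paper reads $N_1^\# \neq \emptyset$ directly off the \branch{} premise $\compiso(p) \not\subseteq \compiso(B)$, which literally says $N_1^\# \neq \emptyset$, so your explicit witness $p$ is valid but not needed.
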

\begin{proof}
    Lemmas \ref{lem:children-union} and \ref{lem:children-intersection} together tell us that $|N_1^\#| + |N_2^\#| = |N^\#|$. So we need only show $0 < |N_1^\#| < |N^\#|$. By Lemma \ref{lem:no-branch-if-f-banned}, we have $f \notin \compiso(B)$. We also have $f \in \compiso(f)$ by Lemma \ref{lem:f-in-compiso-f}, so $f \in \compiso(f) \setminus \compiso(B) = N^\#$. But $f \notin \compiso(p)$ (or else $\compiso(f) \subseteq \compiso(p)$ by Lemma \ref{lem:compiso-f-in-compiso-g}, violating Property \ref{eq:ext-comp-property}), so $f \notin \compiso(p) \setminus \compiso(B) = N_1^\#$. Hence, $N_1^\# \subset N^\#$, which means $|N_1^\#| < |N^\#|$. Moreover, since we applied the \branch{} rule, we must have $\compiso(p) \not\subseteq \compiso(B)$, so $N_1^\# = \compiso(p) \setminus \compiso(B) \neq \emptyset$, which implies $|N_1^\#| > 0$. 
\end{proof}

\subsection{Pruning with SAT}
Implementing a decision procedure for $\prfsys(\phi, \extiso, \compiso)$ that is able to apply the \prune{} rule requires a decision procedure for determining $\neg\phi_\exists(N^\#)$. Below, we outline how to leverage automated reasoning tools to do this.

In the most general terms, any constrained system of binary variables generates a family predicate. Suppose we have a set $V = \set{v_s}{s \in 2^{[n]}}$ of binary variables $v_s \in \{true, false\}$. For any family $g \in \Fam$, we can define a unique assignment $A_g: V \to \{true, false\}$ by the rule $A_g(v_s) = true$ iff $s \in g$. Hence, given any constraint $K = \{A_1, \dots, A_k\}$ of allowed (e.g. satisfying) assignments, we obtain a family predicate by defining $\phi(g) = true$ iff $A_g \in K$.

Many constraint languages admit interpretation in this way. For example, any propositional formula, together with a mapping of some of its variables to sets, generates a family predicate under the natural constraint that $K$ is the set of assignments to those variables (i.e. partial assignments) that can be extended into satisfying assignments. As another example, consider the ILP problem $P_{red}(n)$ in \cite{eifler_safe_2022}. Through our lens, $V$ is the set of the $y_S$ variables, and $K$ is the set of assignments to those variables such that the ILP is feasible with optimal objective value zero. This constrained system generates a family predicate, which the authors showed is equisatisfiable to $\chvatalpred$.

Suppose $\phi$ is a predicate on families. For any node $N = \node{f}{B}$ in the proof system $\prfsys(\phi, \extiso, \compiso)$, we define the \textit{restriction of $\phi$ corresponding to $N$} by
\begin{align}
    \phi_N(g) &= \phi(g) \land f \subseteq g \land g \notin \compiso(B)
\end{align}
That is, $\phi_N$ is the same predicate as $\phi$ except that satisfying families must additionally (1) contain $f$ and (2) must not be banned. We thus limit our focus to satisfying families contained in $N^\#$. Even further, we break symmetry by restricting to families that actually contain the representative $f$ itself (not just isomorphic copies), making it feasible to automatically deduce the satisfiability of $\phi_N$ in practice.

For example, if $\psi$ is a propositional formula containing variables $v_s$ for all $s \in 2^{[n]}$, we can define
\begin{align}
    \psi_N &= \psi \land \alpha_f \land \beta_B \\
    \alpha_f &= \bigwedge_{s \in f} v_s \\
    \beta_B &= \bigwedge_{b \in B} \bigwedge_{b' \in [b]} \bigvee_{s \in b'} \neg v_s
\end{align}
If the predicate generated by $\psi$ over the $v_s$ is $\phi$, then the predicate generated by $\psi_N$ must be $\phi_N$. The $\alpha_f$ constraints enforce that a satisfying assignment represents a superset of $f$, while the $\beta_B$ constraints require that it is not in $\compiso(B)$ (by ensuring it is not a superset of anything isomorphic to some $b \in B$).

Quite analogously, if $\omega$ is an ILP containing variables $v_s \in \{0, 1\}$ for all $s \in 2^{[n]}$, we can add additional constraints on those variables:
\begin{align}
    v_s = 1 & \quad s \in f \label{ilp1} \\
    \sum_{s \in b'} v_s \leq |b'| - 1 & \quad b' \in [b],\, b \in B \label{ilp2}
\end{align}
If $\omega$ and the $v_s$, together with a requirement on the feasibility or optimal value of $\omega$, generate a family predicate $\phi$, then the modified problem with the new constraints generates $\phi_N$. Analogous encodings could be derived for any other constraint language. Note that the banned family constraints ($\beta_B$ in the SAT encoding) can in principal result in an exponential blowup of the encoding size. While this is bound to happen at some point, we have not yet observed in practice the subproblem encoding incurring an overhead that outweights the benefit of solution space reduction.

Now that we know we can automatically deduce the satisfiability of the predicate $\phi_N$ using a constraint language and solver of our choice, we wish to show that proving $\phi_N$ unsatisfiable is enough to justify pruning the node $N$.

\begin{restatable}{lemma}{lem:banned-formula}
    Suppose $B$ is a collection of families. Suppose $g_1 \iso g_2$ are families. If $g_1 \in \compiso(B)$, then $g_2 \in \compiso(B)$.
\end{restatable}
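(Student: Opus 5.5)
The plan is to unfold the definition of $\compiso(B)$ and move the witness permutation from $g_1$ over to $g_2$. The hypothesis $g_1 \in \compiso(B)$ means, by the definition $\compiso(B) = \bigcup_{b \in B} \compiso(b)$ and Equation \ref{eq:compiso}, that there is some $b \in B$ and some $b' \in [b]$ with $g_1 \supseteq b'$. Since $g_1 \iso g_2$, fix a permutation $\pi$ of $[n]$ with $\pi(g_1) = g_2$.

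The key step is to apply $\pi$ to the containment $g_1 \supseteq b'$. Because $\pi$ acts on families elementwise (each set $s \in g_1$ is sent to $\pi(s)$), $\pi$ is monotone with respect to inclusion of families. Hence
\begin{equation*}
g_2 = \pi(g_1) \supseteq \pi(b').
\end{equation*}

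It remains to check that $\pi(b') \in [b]$. This holds because $b' = \sigma(b)$ for some permutation $\sigma$, so $\pi(b') = (\pi \circ \sigma)(b)$, and $\pi \circ \sigma$ is a permutation. This is the same composition argument used in Lemma \ref{lem:compiso-f-in-compiso-g}. Therefore $g_2 \in \compiso(b) \subseteq \compiso(B)$.

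An alternative route is to observe that $g_2 \in \compiso(g_1)$, since $g_2 \supseteq \pi(g_1)$ with $\pi(g_1) \in [g_1]$. Then Lemma \ref{lem:compiso-f-in-compiso-g} gives $\compiso(g_2) \subseteq \compiso(g_1)$, which is the wrong direction for what we need. The direct argument above avoids this, so I will use it.

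There is no real obstacle here. The only point needing care is stating explicitly that the action of $\pi$ on families preserves $\supseteq$ and is compatible with composition of permutations, so that the isomorphism classes behave as expected.
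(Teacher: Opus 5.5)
Your proof is correct and matches the paper's: you unfold $g_1 \supseteq b'$ with $b' \in [b]$, $b \in B$, then push this through $\pi$ to get $g_2 = \pi(g_1) \supseteq \pi(b') \in [b]$. Incidentally, the alternative you discarded also works if you apply Lemma~\ref{lem:compiso-f-in-compiso-g} to $g_1 \in \compiso(b)$ rather than to $g_2 \in \compiso(g_1)$; this gives $\compiso(g_1) \subseteq \compiso(b)$, and since $g_2 \in \compiso(g_1)$ you conclude $g_2 \in \compiso(b) \subseteq \compiso(B)$.
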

\begin{proof}
    Suppose $g_1 \in \compiso(B)$. Then $g_1$ contains some $b' \iso b$ for some $b \in B$. We are guaranteed an isomorphism $\pi(g_1) = g_2$. Therefore, $g_2 = \pi(g_1) \supseteq \pi(b') \iso b$, so $g_2 \in \compiso(B)$.
\end{proof}

\begin{theorem}
    \label{thm:prune-formula}
    Suppose $\phi$ is an isomorphism-invariant predicate on families. Suppose $N = \node{f}{B}$ is a node in the proof system $\prfsys(\phi, \extiso, \compiso)$. Then $\phi_\exists(N^\#)$ if and only if $\phi_N$ is satisfiable.
\end{theorem}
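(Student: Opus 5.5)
We prove the two directions separately. The easy one is "if": every family satisfying $\phi_N$ already lies in $N^\#$. The harder one is "only if": a family in $N^\#$ may contain only an isomorphic copy of $f$, and we must move it by a permutation so that it contains $f$ itself.

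\emph{($\Leftarrow$)} Suppose $g$ satisfies $\phi_N$, so $\phi(g)$ holds, $f \subseteq g$ and $g \notin \compiso(B)$. Since $f \in [f]$ and $g \supseteq f$, we have $g \in \compiso(f)$. Hence $g \in \compiso(f) \setminus \compiso(B) = N^\#$. Together with $\phi(g)$ this gives $\phi_\exists(N^\#)$. This direction needs neither isomorphism-invariance nor Lemma \ref{lem:banned-formula}.

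\emph{($\Rightarrow$)} Suppose $g \in N^\#$ with $\phi(g)$. We will use the following steps.
\begin{enumerate}
\item From $g \in \compiso(f)$, obtain some $f' \in [f]$ with $g \supseteq f'$. So there is a permutation $\pi$ with $\pi(f) = f'$.
\item Set $h = \pi^{-1}(g)$. Since permutations of $[n]$ act injectively on sets and preserve inclusion of families, $h \supseteq \pi^{-1}(f') = f$.
\item Note that $h \iso g$ via $\pi$.
\item Conclude $\phi(h) = \phi(g) = true$ by isomorphism-invariance of $\phi$.
\item Since $g \notin \compiso(B)$, apply the contrapositive of Lemma \ref{lem:banned-formula} with $g_1 = h$ and $g_2 = g$ (using $h \iso g$) to get $h \notin \compiso(B)$.
\end{enumerate}
Thus $h$ satisfies $\phi_N$, so $\phi_N$ is satisfiable.

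The main obstacle is step 2, the symmetry-breaking step. It requires checking that applying $\pi^{-1}$ elementwise to families preserves the subset relation and sends $f'$ back to $f$. Both facts follow directly from $\pi$ being a bijection on $[n]$, hence inducing a bijection on $2^{[n]}$. The rest is bookkeeping with the definitions of $\compiso$ and $N^\#$.
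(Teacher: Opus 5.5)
Your proof is correct and follows the paper's argument essentially step for step. The paper likewise pulls $g$ back by the permutation sending $f'$ to $f$ (your $\pi^{-1}$), uses the contrapositive of Lemma~\ref{lem:banned-formula} to keep the result outside $\compiso(B)$, and invokes isomorphism-invariance; its converse direction is the same direct unfolding of the definitions.
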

\begin{proof}
    $(\implies)$ Suppose $\phi_\exists(N^\#)$. Then there is a family $g' \in N^\# = \compiso(f) \setminus \compiso(B)$ that satisfies $\phi(g')$. Since $g' \in \compiso(f)$, we have $g' \supseteq f'$ for some $f' \in [f]$ (so there exists a permutation $\pi_f$ that sends $f'$ to $f$). Let $g = \pi_f(g') \in [g']$, so that $g = \pi_f(g') \supseteq \pi_f(f') = f$. Since $g' \notin \compiso(B)$, we have $g \notin \compiso(B)$ (contrapositive of Lemma \ref{lem:banned-formula}). Finally, since $\phi$ is isomorphism-invariant, we must have $\phi(g) =\phi(g') = true$. Therefore, $\phi_N(g) = true$, so $\phi_N$ is satisfiable.

    $(\impliedby)$
    Suppose there is some $g \in \Fam$ satisfying $\phi_N(g)$. Then $\phi(g) = true$ and $f \subseteq g$ and $g \notin \compiso(B)$. Therefore, $g \in N^\#$ and $\phi(g) = true$, so $\phi_\exists(N^\#)$.
\end{proof}

For a given node $N = \node{f}{B}$, it is possible to restrict $\phi_N$ even further with \textit{symmetry constraints}. Choose any $p \in \ext(f)$, and define
\begin{align}
    \phi_N^p(g) = \phi_N(g) \land (g \in \compiso(p) \implies p \subseteq g)
\end{align}
That is, we add the constraint that satisfying families must contain $p$, or else not contain anything isomorphic to $p$. The idea is, if we can satisfy $\phi$ with a family containing $p' \iso p$, we can instead satisfy $\phi$ (assuming isomorphism-invariance) with an isomorphic family that contains $p$ itself. The added constraint forces us to "look for" solutions containing $p$ "before" considering those containing isomorphic families, limiting the search space further. This constraint could be encoded in SAT as follows:
\begin{align}
    \psi_N^p &= \psi_N \land \sigma_p \\
    \sigma_p &= \left( \bigwedge_{s \in p \setminus f} v_s \right) \lor \left( \bigwedge_{p' \in [p]} \bigvee_{t \in p'} \neg v_t \right) \\
        &= \bigwedge_{s \in p \setminus f} \bigwedge_{p' \in [p]} \left( v_s \lor \bigvee_{t \in p'} \neg v_t \right)
\end{align}
We evaluate the effect of adding these constraints empirically in Section~\ref{sec:milp-experiments}; they did not improve solving performance in our experiments, so we do not impose them in the main results we report.

\begin{restatable}{theorem}{thm:prune-formula-sym}
    Suppose $\phi$ is an isomorphism-invariant predicate on families. Suppose $N = \node{f}{B}$ is a node in the proof system $\prfsys(\phi, \extiso, \compiso)$, and fix some $p \in \ext(f)$. Then $\phi_\exists(N^\#)$ if and only if $\phi_N^p$ is satisfiable.
\end{restatable}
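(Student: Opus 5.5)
We plan to reduce this to Theorem \ref{thm:prune-formula}, which already gives $\phi_\exists(N^\#) \iff \phi_N$ satisfiable. So it suffices to show that $\phi_N$ is satisfiable if and only if $\phi_N^p$ is. The direction $(\impliedby)$ is immediate: $\phi_N^p(g)$ is a conjunction containing $\phi_N(g)$, so any witness for $\phi_N^p$ also witnesses $\phi_N$.

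For $(\implies)$, take $g$ with $\phi_N(g) = true$, so $\phi(g)$ holds, $f \subseteq g$, and $g \notin \compiso(B)$. We split into two cases. If $g \notin \compiso(p)$, the implication $g \in \compiso(p) \implies p \subseteq g$ holds vacuously, and $g$ itself satisfies $\phi_N^p$. If $g \in \compiso(p)$, then $g \supseteq p'$ for some $p' \in [p]$, i.e.\ $p' = \pi^{-1}(p)$ for some permutation $\pi$. We then set $h = \pi(g)$, so that $h \supseteq p$ and in particular $h \in \compiso(p)$ with $p \subseteq h$. Isomorphism-invariance gives $\phi(h) = \phi(g) = true$, and Lemma \ref{lem:banned-formula} (in contrapositive) gives $h \notin \compiso(B)$. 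Finally, $p \in \extiso(f)$ means $p = f \cup \{s\} \supseteq f$, so $f \subseteq p \subseteq h$. Hence $\phi_N^p(h) = true$.

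The only point needing care is the requirement that the new witness $h$ contain the literal representative $f$, not merely an isomorphic copy of it. This is what keeps $h$ inside $\phi_N$ rather than only inside $N^\#$. It works precisely because $p$ extends $f$, so mapping $p'$ onto $p$ automatically puts $f$ inside $h$. This is why the statement restricts to $p \in \ext(f)$; for an arbitrary $p$ the permutation could move $f$ out of $h$. Everything else is bookkeeping with earlier lemmas, and chaining the two equivalences yields the theorem.
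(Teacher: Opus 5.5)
Your proof is correct and follows the paper's argument: the same case split on whether the witness lies in $\compiso(p)$, the same relabeling by a permutation sending $p'$ to $p$ (using $f \subseteq p$ to keep $f$ in the new witness), and the same appeals to isomorphism-invariance and Lemma~\ref{lem:banned-formula}. The only difference is organizational: in the forward direction you invoke Theorem~\ref{thm:prune-formula} to get a witness already containing $f$, whereas the paper redoes that normalization step ($g \mapsto \pi_f(g)$) inline before the case split.
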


\section{Implementation}

We implemented a decision procedure for \prfsys{} as a Rust library.\footnote{Source code available at \url{https://github.com/jesselooney/symbreak}.} Users can implement the \texttt{NodeOracle} trait (interface) by providing methods for serializing and solving subproblems given input nodes (e.g. writing a DIMACS file and invoking a SAT solver on that file). The partitioning procedure uses a given \texttt{NodeOracle} to determine whether individual nodes can be pruned. The user needs to provide an implementation of the oracle, which determines the family-predicate $\phi$ being tested. Our prototype also supports a dynamic timeout system for the oracle based on the depth of the node.

Our partitioner, \texttt{symbreak}, branches on nodes in which the \texttt{NodeOracle} cannot determine the existence of a counterexample. We use \texttt{nauty} to check isomorphism of families in order to build the equivalence classes \cite{nauty}. Nodes are branched on and tested for prunability in parallel, and the partitioner logs its progress as it executes. We implemented a verifier script in Python that inspects the logs and the set of terminal subproblems to ensure that all nodes are accounted for.

To apply \branch{} rule, one needs to choose an extension $p$ to branch on. Our implementation heuristically chooses the extension with the smallest equivalence class. We experimented with alternative heuristics (e.g. the largest equivalence class), but did not find one that was consistently better than the others.

\subsection{Case Study: Chvátal's Conjecture}
As a test case for our partitioner, we used it to verify Chvátal's Conjecture for finite ground sets. We developed an implementation of \texttt{NodeOracle} that encodes the conjecture as a boolean satisfiability problem using RustSAT \cite{Jabs2025RustsatLibrarySat} and determines prunability with Kissat \cite{biere_cadical_kissat_2024}. We developed a novel SAT encoding of the conjecture based on the ILP encoding in \cite{eifler_safe_2022}. For each set $s \in 2^{[n]}$, we define propositional variables $x_s$ and $y_s$. The truth of the former is interpreted as meaning that $s$ is a member of a family $X$, the latter as meaning that $s$ is a member of a family $Y$. We impose the following constraints (viewed as clauses) on these variables:
\begin{align}
    \{\neg y_s, \neg y_t\} \quad & s,t \in 2^{[n]}, s \cap t = \emptyset \label{c1} \\
    \{\neg y_t, x_s\} \quad & s,t \in 2^{[n]}, s \subseteq t \label{c2} \\
    \sum_{s \in 2^{[n]}} y_s > \sum_{s \in 2^{[n]}, e \in s} x_s \quad & e \in [n] \label{c3} \\[10pt]
    \{\neg y_s\} \quad & s \in 2^{[n]}, |s| \in \{1,2\} \label{c4} \\
    \{x_s\} \quad & s \in 2^{[n]}, |s| = 1 \label{c5} \\
    \{\neg x_s\} \cup \set{y_t}{t \in 2^{[n]}, t \supseteq s} \quad & s \in 2^{[n]}, |s| > 1 \label{c6} \\
    \begin{matrix}
        \set{y_t}{t \in 2^{[n]}, t \cap s = \emptyset} \\
        \cup \{\neg x_s, y_s\}
    \end{matrix} \quad & s \in 2^{[n]} \label{c7}
\end{align}

Constraints \ref{c6} and \ref{c7} are justified by the following theorem.

\begin{theorem}
    If there exists a counterexample to Chvátal's Conjecture, then there exists a counterexample (over the same ground set) in which the downset is minimal and the intersecting family is maximal.
\end{theorem}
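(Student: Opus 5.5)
Throughout, a counterexample is a pair $(X,Y)$ in which $X$ is a downset, $Y\subseteq X$ is intersecting, and $|Y| > |\set{s\in X}{e\in s}|$ for every $e\in[n]$. The plan is a two-stage normalization: first shrink the downset to the one generated by $Y$, then grow $Y$ inside that downset. Both stages must preserve the counterexample property, and the second must not break the minimality won in the first.

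\emph{Stage 1 (minimal downset).} Replace $X$ by the downset generated by $Y$, namely $X' = \set{s}{s\subseteq t \text{ for some } t\in Y}$. Then $X'\subseteq X$ because $X$ is closed under subsets, and $Y\subseteq X'$ holds trivially. Every star of $X'$ is contained in the corresponding star of $X$, so $|Y|$ still strictly exceeds every star. Hence $(X',Y)$ is again a counterexample. Its downset is minimal in the sense that every $s\in X'$ lies below some member of $Y$, which is exactly what constraint \ref{c6} expresses for $|s|>1$.

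\emph{Stage 2 (maximal intersecting family).} The main point here is that we enlarge $Y$ inside $X'$ without changing $X'$. Among all intersecting families $Y'$ with $Y\subseteq Y'\subseteq X'$, choose one that is maximal under inclusion; one exists because everything is finite. This step changes neither the downset nor the stars, while $|Y'|\ge|Y|$, so $(X',Y')$ is still a counterexample. The downset $X'$ is still generated by $Y'$: it is generated by $Y\subseteq Y'\subseteq X'$, so it is the downset generated by $Y'$ as well. Thus minimality survives. Maximality means that for every $s\in X'\setminus Y'$, the family $Y'\cup\{s\}$ is not intersecting. Either $s=\emptyset$, or $s$ is disjoint from some $t\in Y'$. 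This is constraint \ref{c7}: if $x_s$ holds, then $y_s$ holds or some $y_t$ with $t\cap s=\emptyset$ holds.

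\emph{Main obstacle.} The delicate point is that the two normalizations interact: enlarging $Y$ could in principle force a larger generated downset, and shrinking $X$ could invalidate maximality. Both problems are avoided by performing Stage 1 first and then constraining Stage 2 to subsets of $X'$. It remains to check that the resulting pair is minimal and maximal in exactly the senses encoded by \ref{c6} and \ref{c7}. The notion of maximality should be relative to the fixed downset, not absolute, and this is the reading I would state explicitly in the proof. The degenerate case $\emptyset\in X'$ causes no trouble for \ref{c7}: $\emptyset\notin Y'$, and $\emptyset$ is disjoint from every $t\in Y'$, where $Y'\neq\emptyset$ since $|Y'|>0$.
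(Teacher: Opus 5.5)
Your proof is correct, but it performs the two normalizations in the opposite order from the paper.

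The paper first takes $Y$ to be a maximal intersecting family inside the original downset $X$, and only then shrinks $X$ to the downset $X'$ generated by $Y$. In that order, minimality of $X'$ holds by definition. Maximality survives because any intersecting $Y''$ with $Y\subseteq Y''\subseteq X'$ also lies in $X$, so $Y''=Y$. The paper leaves this last point implicit.

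You shrink first and maximize inside $X'$ afterwards. That obliges you to check that the enlarged $Y'$ still generates $X'$. Writing $D(\cdot)$ for the generated downset, you do this correctly via $X'=D(Y)\subseteq D(Y')\subseteq X'$.

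The \emph{obstacle} you describe is therefore only half real. Enlarging $Y$ can indeed enlarge the generated downset. But shrinking the ambient downset can never destroy inclusion-maximality of a $Y$ it still contains, which is why the paper's order needs no extra argument at all.

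Either order gives a short proof. Your version adds an explicit match with clauses (\ref{c6}) and (\ref{c7}), which the paper does not spell out. This includes the edge case $s=\emptyset$ and the observation $Y'\neq\emptyset$ that makes it harmless.
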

\begin{proof}
    Suppose there is a counterexample to Chvátal's Conjecture, so there is a downset $X$ containing an intersecting family larger than every contained star. There may be multiple such intersecting families, but we are free to let $Y$ be a maximal one. Now we can let $X'$ be the minimal downset generated by $Y$. Since $Y \subseteq X$, and $X$ is a downset, we have $X' \subseteq X$. Clearly, then, the largest stars in $X'$ are no larger than those in $X$. Thus $X'$ and $Y$ still constitute a counterexample to Chvátal's Conjecture.
\end{proof}
Note that one of the constraints is a cardinality constraint, which we encode using a totalizer \cite{bailleux_totalizer_2003}. Together, constraints \ref{c1}--\ref{c3} assert that $Y$ is a counterexample to Chvátal's Conjecture: $Y$ is an intersecting family which generates a downset $X$ in which all the stars are smaller than $Y$. These constraints alone would generate (over the variables $y_s$) a family predicate equivalent to $\chvatalpred$. The remaining constraints limit the search space further, while ensuring that any counterexample they might exclude would have a counterpart they permit (the interpretation of each constraint is given in Table \ref{tab:chvatal-sat-interpretations}). The resulting formula thus generates a predicate $\phi^{SAT}$ that is equisatisfiable to $\chvatalpred$. Moreover, $\phi^{SAT}$ is isomorphism-invariant, since each constraint depends only on the structure of $X$ and $Y$, not the particular names of the elements they contain. Hence, Theorem \ref{thm:prune-formula} justifies using the satisfiability of $\phi^{SAT}_N$ (implemented as $\psi_N$) to determine whether a node $N$ can be pruned in the proof system $\prfsys(\phi^{SAT}, \extiso, \compiso)$. We did not add the symmetry constraints $\sigma_p$ in the experiment since we found that their effect is mixed.

\begin{table}[t]
    \centering
    \caption{Interpretation of each constraint in our SAT encoding.}
    \label{tab:chvatal-sat-interpretations}
    \setlength{\tabcolsep}{5pt}
    \begin{tabular}{rp{0.65\linewidth}}
        \toprule
        Constraints & Interpretation \\
        \midrule
        \ref{c1} & $Y$ is an intersecting family. \\
        \ref{c2} & $X$ contains the downset generated by $Y$. \\
        \ref{c3} & $Y$ has greater cardinality than that of any star in $X$. \\
        \ref{c4} & See constraints (7f) in \cite{eifler_safe_2022}. \\
        \ref{c5} & See constraints (7g) in \cite{eifler_safe_2022}. \\
        \ref{c6} & $X$ is a \textit{minimal} downset containing $Y$; $X$ contains only sets with supersets in $Y$. \\
        \ref{c7} & $Y$ is a \textit{maximal} intersecting family in $X$; any set in $X \setminus Y$ must be disjoint with something in $Y$. \\
        \bottomrule
    \end{tabular}
\end{table}

Additionally, we developed a separate implementation of \texttt{NodeOracle} that encodes the conjecture as an integer linear program using the PySCIPOpt \cite{maher_pyscipopt_2016} API for exact rational solving with SCIP \cite{hojny2025scipoptimizationsuite10}. We use the same constraints as in $P_{red}$ of \cite{eifler_safe_2022}, with the exception of (7h) (see below), to specify the base problem. We require that the optimal objective value of the ILP is zero, thus generating a family predicate $\phi^{ILP}$ equisatisfiable to $\chvatalpred$ (as shown by \cite{eifler_safe_2022}). Because we left out constraint (7h), the only constraint that cares about the particular names of the elements of $X$ and $Y$, the generated predicate is isomorphism-invariant. Therefore, Theorem \ref{thm:prune-formula} ensures the correctness of pruning by adding the following constraints for a node $N = \node{f}{B}$:
\begin{align}
    y_s = 1 & \quad s \in f \\
    \sum_{s \in b'} y_s \leq |b'| - 1 & \quad b' \in [b],\, b \in B
\end{align}
and checking whether the resulting ILP has optimal value zero. When partitioning with this oracle, we first branch on the extension $\{\{1,2,3,4\}\}$ to partially recover the symmetry breaking that (7h) provides in the serial encoding (taking this family implies (7h)), and fall back to the default branching heuristic described above for all subsequent branches.

\begin{table*}[t]
    \centering
    \begin{threeparttable}
        \caption{Partition sizes, generation time, and total time to solve resulting instances, for \texttt{symbreak} (oracle timeouts 1s and 3s) and \texttt{march\_cu} (default config and depths 15 and 20) on $n=7$. Generation wall times are given only for the parallel partitioners.}
        \label{tab:sat-partitions}
        \setlength{\tabcolsep}{5pt}
        \begin{tabular}{lS[table-format=4.0]S[table-format=3.1]S[table-format=4.1]S[table-format=5.1]S[table-format=4.1]}
            \toprule
            \multirow{2}{*}{partitioner} &
                {\multirow{2}{*}{partition size}} &
                \multicolumn{2}{c}{generation time (s)} &
                {total solve time (s)} & {max solve time (s)} \\
            & & {wall} & {cpu} & {cpu} & {cpu}\\
            \midrule
            \texttt{symbreak} (1s)   &   102 & 411.1 & 6138.9 &   622.3           &   47.5 \\
            \texttt{symbreak} (3s)   &    46 & 120.8 & 2621.7 &   396.3           &   34.6 \\
            \texttt{march\_cu}       &  1570 & {---} &    9.0 &  8650.0 \tnote{*} &   {TO} \\
            \texttt{march\_cu} (d15) &  2821 & {---} &   21.7 & 39127.3           & 2102.6 \\
            \texttt{march\_cu} (d20) & 44065 & {---} &  356.9 & 56843.9           &  276.6 \\
            \bottomrule
        \end{tabular}
        \begin{tablenotes}
            \item [*] Does not include four instances that timed out after one hour.
        \end{tablenotes}
    \end{threeparttable}
\end{table*}

\begin{figure*}[h!]
\centering
    \begin{minipage}{0.3\textwidth}
    \centering
        \includegraphics[width=\textwidth]{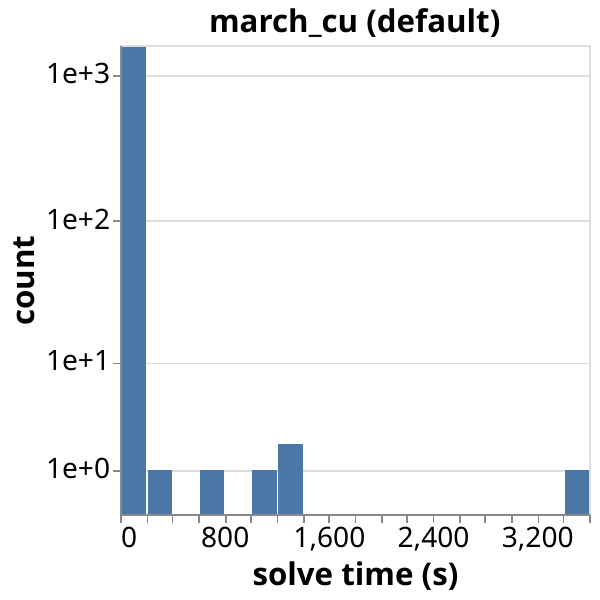}
        \subcaption{}
        \label{fig:march-partition-histplot}
    \end{minipage}
    \begin{minipage}{0.3\textwidth}
    \centering
        \includegraphics[width=\textwidth]{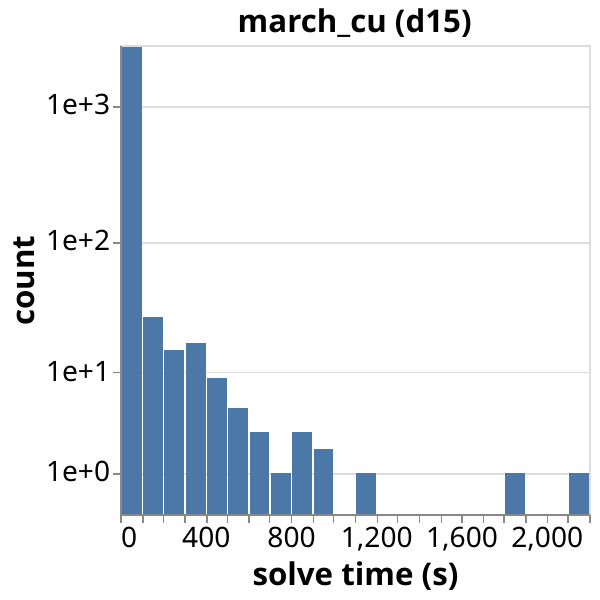}
        \subcaption{}
        \label{fig:march-d15-partition-histplot}
    \end{minipage}
    \begin{minipage}{0.3\textwidth}
    \centering
        \includegraphics[width=\textwidth]{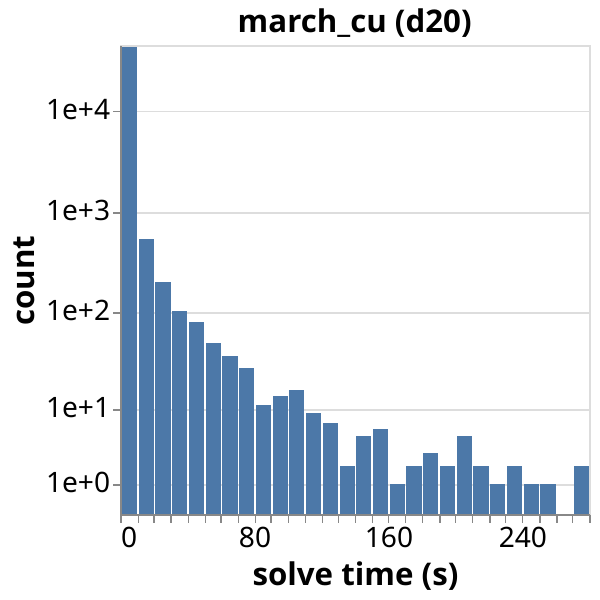}
        \subcaption{}
        \label{fig:march-d20-partition-histplot}
    \end{minipage}
    \begin{minipage}{0.3\textwidth}
    \centering
        \includegraphics[width=\textwidth]{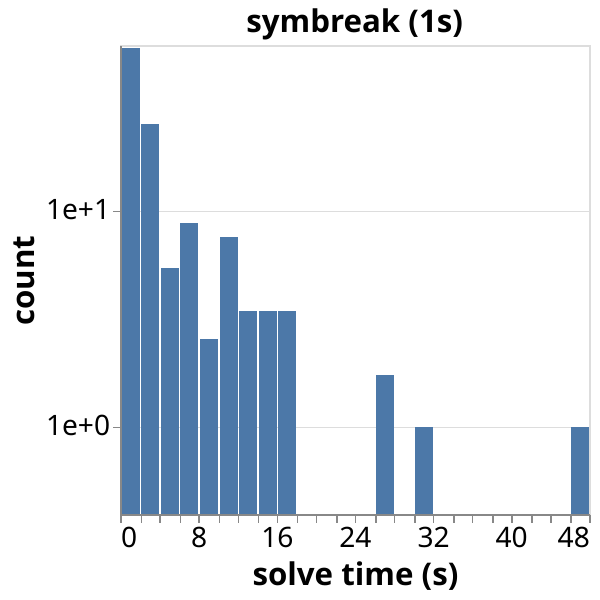}
        \subcaption{}
        \label{fig:sat-t1-partition-histplot}
    \end{minipage}
    \begin{minipage}{0.3\textwidth}
    \centering
        \includegraphics[width=\textwidth]{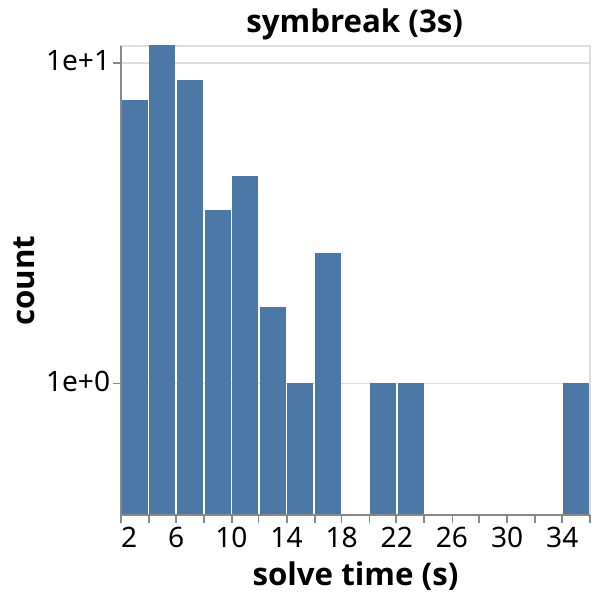}
        \subcaption{}
        \label{fig:sat-t3-partition-histplot}
    \end{minipage}
    \caption{Distribution of solve times for subproblems of $n=7$ generated by \texttt{march\_cu} (a) default config (excluding four timed-out instances), (b) depth 15, (c) depth 20; and SAT-based \texttt{symbreak} with oracle timeouts (d) 1s and (e) 3s. The $y$ axis is log-scaled in each case.}
\end{figure*}

\section{Experiments}
We evaluated the instantiations of our partitioner described in the previous section, comparing them to state-of-the-art techniques applied to the same problem of solving finite cases of Chvátal's Conjecture. In the SAT world, the dominant paradigm is Cube and Conquer---in particular, look-ahead-based partitioning. We compare our partitioner to the look-ahead solver \texttt{march\_cu} \cite{heule_cnc_2026}.

For ILP, we instead run our partitioner in solving mode and compare against a state-of-the-art exact solver, SCIP. In doing this, we obtain the first fully verified solution of the case $n = 8$. We detail these experiments in the following subsections. Unless noted otherwise, all computations were run on a computing cluster consisting of Dell PowerEdge R6525 rack servers with 2.6-GHz AMD CPUs.

\subsection{SAT Partitioning: Comparison with Look-Ahead}
We compared our partitioner with \texttt{march\_cu}, the state-of-the-art partitioner used in Cube and Conquer, on a SAT instance of the Chvátal problem for $n = 7$. We ran two configurations of our partitioner, each with a different oracle timeout (1s and 3s), both with 16 threads and a maximum depth of 20. We ran three configurations of \texttt{march\_cu}, running the default configuration as well forcing it to partition to depth 15 or 20 with the \texttt{-d} flag. Table \ref{tab:sat-partitions} reports the number of subproblems generated by each partitioner and the time each took to generate. Each subproblem was then solved using Kissat 4.0.4 with the \texttt{{-}{-}unsat} flag with a one-hour time limit. For each partitioner, we report in Table \ref{tab:sat-partitions} the total CPU time taken to solve all subproblems that finished within the time limit.

\begin{table}[t!]
    \caption{ILP dynamic splitting run times on 52 cores with tuned SCIP settings. Each configuration $\{t, r\}$ uses an initial per-node timeout of $t$ seconds up to depth 10, after which the timeout is multiplied by $r$ for each subsequent depth. Wall and total times are in H:M:S format; leaves is the number of leaf subproblems in the final partition.}
    \label{tab:milp-times-tuned}
    \setlength{\tabcolsep}{4pt}
    \begin{center}
        \begin{tabular}{l rrr rrr}
            \toprule
            & \multicolumn{3}{c}{$n = 7$} & \multicolumn{3}{c}{$n = 8$} \\
            \cmidrule(lr){2-4} \cmidrule(lr){5-7}
            Config. & {Wall} & {Total} & {Leaves} & {Wall} & {Total} & {Leaves} \\
            \midrule
            Serial & \text{0:00:24} & {---} & {---} & \text{72:26:02} & {---} & {---} \\
            \midrule
            \{5, 2\}  & \text{0:01:00} & \text{0:01:19} & 11 & \text{2:33:11} & \text{27:26:58} & 197 \\
            \{5, 5\}  & \text{0:01:01} & \text{0:01:19} & 11 & \text{2:15:59} & \text{6:45:03}  & 76 \\
            \{60, 2\} & \text{0:00:56} & \text{0:00:56} & 1  & \text{4:11:04} & \text{12:40:15} & 53 \\
            \{60, 5\} & \text{0:00:56} & \text{0:00:56} & 1  & \text{1:53:49} & \text{5:08:38}  & 33 \\
            \bottomrule
        \end{tabular}
    \end{center}
\end{table}

\begin{table}[t!]
    \caption{ILP dynamic splitting run times on 52 cores with default SCIP settings. The $n = 7$ runs use switch depth $s = 10$ and the $n = 8$ runs use $s = 16$ (the initial timeout $t$ applies up to depth $s$ and is multiplied by $r$ for each subsequent depth). Wall and total times are in H:M:S format; leaves is the number of leaf subproblems in the final partition. The serial $n = 8$ run did not terminate within seven days.}
    \label{tab:milp-times-default}
    \setlength{\tabcolsep}{4pt}
    \begin{center}
        \begin{tabular}{l rrr rrr}
            \toprule
            & \multicolumn{3}{c}{$n = 7$} & \multicolumn{3}{c}{$n = 8$} \\
            \cmidrule(lr){2-4} \cmidrule(lr){5-7}
            Config. & {Wall} & {Total} & {Leaves} & {Wall} & {Total} & {Leaves} \\
            \midrule
            Serial & \text{0:05:24} & {---} & {---} & \text{$>$7 days} & {---} & {---} \\
            \midrule
            \{5, 7\}  & \text{0:03:03} & \text{0:08:20} & 35 & \text{19:15:05} & \text{340:26:21} & 1679 \\
            \{10, 6\} & \text{0:04:23} & \text{0:08:53} & 26 & \text{19:28:20} & \text{321:43:16} & 1214 \\
            \{20, 5\} & \text{0:05:12} & \text{0:10:36} & 21 & \text{19:21:27} & \text{322:30:35} & 870 \\
            \{60, 4\} & \text{0:06:03} & \text{0:07:12} & 6  & \text{20:24:28} & \text{303:47:54} & 570 \\
            \bottomrule
        \end{tabular}
    \end{center}
\end{table}

Note that we spent significant effort trying different SAT encodings of Chvátal's Conjecture by tuning the cardinality encodings and adding or removing known encodings of various lemmas that strengthen our assumptions for the conjecture. We also attempted to tune Kissat as in \cite{wu2025cubingtuning} and to use a solver that was more aware of cardinality constraints \cite{klausesclauses}. None of these methods helped significantly, motivating our current approach.

Figures \ref{fig:march-partition-histplot}--\ref{fig:sat-t3-partition-histplot} show the distribution of solve time over the subproblems generated by each partitioner configuration. The distributions share the same right-skew, with more easy subproblems and a few harder ones, though those from our partitioner have a smaller range of solve times. Only the depth 20 configuration of \texttt{march\_cu} approaches our narrow range, but does so at the cost of nearly a thousand times more instances. The data suggest that \texttt{march\_cu} is decomposing the original problem less effectively than \texttt{symbreak}, carving off many more trivial subproblems in the process. We expect \texttt{symbreak}'s advantage is due to its better internal representation of the problem structure and that it is guaranteed to strictly reduce the solution space at each branch (Theorem \ref{thm:branch-decreases-card}).

\subsection{Solving $n=8$ with ILP and Dynamic Partitioning}
\label{sec:milp-experiments}
We found that SAT solvers tend to be significantly slower than ILP solvers for verifying the same value of $n$. Therefore, to tackle Chvátal's Conjecture for larger $n$, we used an existing ILP encoding and the exact solving mode of SCIP 10.0.3. Instead of pre-determining the depth to which we perform partitioning, we employ a dynamic divide-and-conquer scheme. Each configuration is defined as $\{t, r\}$: the initial per-node oracle timeout of $t$ seconds is used for all nodes up to a switch depth $s$, after which the timeout is multiplied by $r$ for each subsequent depth. We ran each configuration using 52 cores.

Except where noted otherwise, we ran SCIP with conflict analysis disabled (\texttt{conflict/enable = false}) and Gomory cuts disabled (\texttt{separating/gomory/freq = -1}). We found that disabling these features is significantly better in both the sequential and the divide-and-conquer setting.

\begin{figure}
    \centering
    \includegraphics[width=0.35\textwidth]{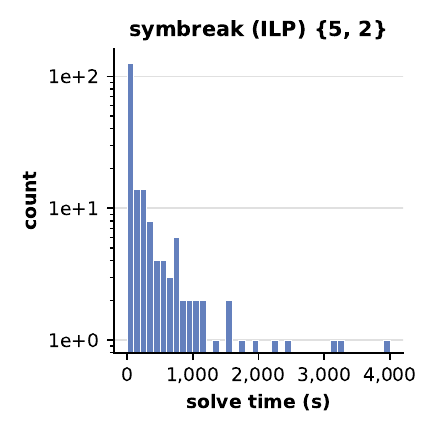}
    \caption{Distribution of solve times for leaf subproblems in the \{5, 2\} configuration for dynamic splitting with ILP and $n=8$.}
    \label{fig:milp-8052-histogram}
\end{figure}

Table~\ref{tab:milp-times-tuned} reports, for each configuration with the tuned SCIP settings and switch depth $s = 10$, the wall-clock time, the total time (the sum of all per-node solving and branching times), and the number of leaf subproblems in the final partition. With these settings, serial exact SCIP verifies $n = 7$ in 24 seconds; in the $\{60, 2\}$ and $\{60, 5\}$ configurations, the root problem for $n = 7$ is solved within the initial timeout, so no partitioning takes place and the wall time is simply the root solve time. The root problem takes longer than the serial run because the serial encoding includes constraint (7h) of \cite{eifler_safe_2022}, which asserts that the counterexample downset must contain the set $\{1,2,3,4\}$; as discussed above, our divide-and-conquer oracle omits this constraint to keep the encoding isomorphism-invariant.

For $n = 8$, the serial solver took just over three days, whereas every dynamic divide-and-conquer run finished within about four and a half hours; the fastest configuration, $\{60, 5\}$, finished in under two hours, a roughly $38\times$ speedup. Notably, the total time of each configuration is also well below the serial solve time, so the partitioning reduces not only wall-clock time but also the overall work. This suggests that our proposed techniques can indeed effectively decompose the problem and parallelize the solving process. The configuration with a large initial timeout and an aggressive increase rate, $\{60, 5\}$, performed best on $n = 8$, whereas the same initial timeout with a milder rate, $\{60, 2\}$, performed worst; we believe an aggressive rate reduces duplicated work in the case where a child is only slightly simpler than its parent. Figure~\ref{fig:milp-8052-histogram} shows a breakdown of the solver runtime on the leaf subproblems of the $\{5, 2\}$ configuration. The distribution is heavily right-skewed: the median leaf solve time is about 14 seconds, but the hardest leaf required more than an hour. This suggests that a more intelligent dynamic re-splitting scheme could better leverage parallel resources and lead to further performance gain.

\begin{table}[t!]
    \caption{Effect of symmetry-breaking constraints on $n = 8$ with tuned SCIP settings: wall time, total time, and number of leaf subproblems for each configuration of Table~\ref{tab:milp-times-tuned}, without and with the ILP analogue of the symmetry constraints $\sigma_p$ imposed on each subproblem.}
    \label{tab:milp-sym}
    \setlength{\tabcolsep}{4pt}
    \begin{center}
        \begin{tabular}{l rr rr rr}
            \toprule
            & \multicolumn{2}{c}{Wall} & \multicolumn{2}{c}{Total} & \multicolumn{2}{c}{Leaves} \\
            \cmidrule(lr){2-3} \cmidrule(lr){4-5} \cmidrule(lr){6-7}
            Config. & {base} & {+sym} & {base} & {+sym} & {base} & {+sym} \\
            \midrule
            \{5, 2\}  & \text{2:33:11} & \text{2:42:52} & \text{27:26:58} & \text{33:03:47} & 197 & 209 \\
            \{5, 5\}  & \text{2:15:59} & \text{2:27:04} & \text{6:45:03}  & \text{8:07:36}  & 76  & 85 \\
            \{60, 2\} & \text{4:11:04} & \text{4:22:47} & \text{12:40:15} & \text{18:43:41} & 53  & 61 \\
            \{60, 5\} & \text{1:53:49} & \text{1:47:38} & \text{5:08:38}  & \text{6:04:15}  & 33  & 35 \\
            \bottomrule
        \end{tabular}
    \end{center}
\end{table}

Table~\ref{tab:milp-sym} reports runs of the same configurations on $n = 8$ in which each subproblem additionally imposes the ILP analogue of the symmetry constraints $\sigma_p$. Adding these constraints does not boost performance: wall times remain similar (three configurations run slightly slower and one slightly faster), while the total time and the number of leaf subproblems increase in every configuration.

Table~\ref{tab:milp-times-default} reports the results of the same experiments run with SCIP's default settings, using switch depth $s = 10$ for $n = 7$ and $s = 16$ for $n = 8$. The contrast with the serial solver is starker in this setting: serial exact SCIP verifies $n = 7$ in 5 minutes and 24 seconds, and on $n = 8$ it timed out after seven days, while every dynamic configuration verified $n = 8$ in under 21 hours. Dynamic partitioning thus renders the problem tractable even without tuning the solver settings.

\subsection{Proving $n=8$}

Beyond solving the case $n = 8$, we also produced and checked proof certificates for it. To do so, we re-ran the $\{60, 5\}$ configuration with the tuned SCIP settings on the same cluster, this time with SCIP's proof-production options turned on: as before, each node was pruned if it was solved within its timeout and branched otherwise. The run partitioned the problem into 38 leaf subproblems, and we successfully extracted a proof certificate in the VIPR format \cite{cheung_verifying_2017} for each of them. Producing certificates roughly doubles the cost of solving: the run took 4:13:42 of wall time and 11:27:21 of total solving and branching time, compared to 1:53:49 and 5:08:38 for the same configuration without proof production. The overhead also slightly alters the shape of the search: the run produced 38 leaf subproblems rather than the 33 of the certificate-free run, as slower oracle calls push a few additional nodes past their timeouts and cause them to be branched. We then ran the official proof checker \cite{cheung_verifying_2017} on those certificates and successfully validated all of them; checking took 1970 seconds of CPU time in total, and the longest proof-checking time for a single certificate was 356 seconds. The total size of the certificates is 14\,GB. The proof certificates and the proof-checking logs are publicly available on Zenodo \cite{looney_vipr-proof}. We have also checked that the leaf nodes form a partition of the original problem, so the validated certificates together constitute a machine-checked proof of Chvátal's Conjecture for $n = 8$. In contrast, prior work solved $n = 8$ with serial exact SCIP without generating a proof certificate \cite{eifler_algorithms_2025} and noted that a certificate of such a sequential run would exceed 1\,TB, which is infeasible for the proof checker. As far as we know, this is the first time this case of the conjecture has been formally verified.

We note that the certificates cover only the solver's reasoning on the individual subproblems. The partitioning steps themselves, whose correctness is guaranteed by Theorems \ref{thm:symbreak-preserves-counterexamples} and \ref{thm:symbreak-preserves-disjoint}, are not accompanied by certificates; conceptually, the partitioner carries out part of the deductive work that a monolithic run would have to certify, which also helps explain why our certificates are far smaller than the estimated 1\,TB. Unlike the numerical reasoning that VIPR certifies, this part of the reasoning is discrete and involves no floating-point arithmetic; still, producing machine-checkable certificates for the partitioning itself would make the proof verifiable end to end, and is an appealing direction for future work. It might also be desirable to construct a single certificate of the original problem from certificates of sub-problems~\cite{nair2022proof}.

\section{Conclusion}
We presented a novel partitioning strategy for extremal combinatorics that breaks on isomorphic families of sets. Our method of branching on an extension of a partial family of sets and banning isomorphic equivalents both showed improvements over the look-ahead-based partitioning for SAT and successfully parallelized ILP solving. A dynamic re-splitting scheme that combines our partitioning strategy with an exact MILP solver exactly solved Chvátal's Conjecture for $n = 8$ in a reasonable time, and produced the first verified proof of this case of the statement.

In the future, it would be interesting to test our methodology on other problems in extremal set theory, such as the Union-closed Sets Conjecture \cite{bruhn_journey_2015}. Our method is already general enough to handle these cases, so extensions would merely need to formally describe an appropriate predicate in a language that admits automatic deduction in order to apply our approach. Using our solver-agnosticism, such explorations could also investigate whether ILP or SAT methods are superior on the chosen problems. A related direction is to investigate why ILP is more effective than SAT in the case of Chvátal's Conjecture. Our implementation could also be pushed further by better utilizing available cores with more intelligent dynamic re-splitting. Considering that solver configurations significantly impacted runtime in our experiments, performing parameter tuning~\cite{wu2023lightweight,wilson2025per,wu2025cubingtuning} may further boost performance. Finally, producing machine-checkable certificates for the partitioning steps themselves would make the resulting proofs verifiable end to end.

\section*{Acknowledgment}
This work was performed in part using high-performance computing equipment at Amherst College obtained under National Science Foundation Grant No. 2117377. Any opinions, findings, and conclusions or recommendations expressed in this publication are those of the authors and do not necessarily reflect the views of the National Science Foundation. Wu's research is supported by a gift from The VMware University Research Fund.

We are grateful to the anonymous IJCAR and FMCAD reviewers, who together substantially improved the quality of this contribution with their thoughtful and constructive feedback.

\bibliographystyle{IEEEtran}
\bibliography{references.bib}

@article{kirchweger_sat-modulo-sym,
author = {Kirchweger, Markus and Szeider, Stefan},
title = {SAT Modulo Symmetries for Graph Generation and Enumeration},
year = {2024},
issue_date = {July 2024},
publisher = {Association for Computing Machinery},
address = {New York, NY, USA},
volume = {25},
number = {3},
issn = {1529-3785},
url = {https://doi.org/10.1145/3670405},
doi = {10.1145/3670405},
journal = {ACM Trans. Comput. Logic},
month = jul,
articleno = {18},
numpages = {30}
}

@article{eifler_algorithms_2025,
	title = {Algorithms and certificates for exact {Mixed} {Integer} {Programming}},
	url = {https://depositonce.tu-berlin.de/handle/11303/25121},
	language = {en},
	urldate = {2026-07-23},
	author = {Eifler, Leon},
	year = {2025},
}

@book{baader_term-rewriting,
	address = {Cambridge},
	title = {Term Rewriting and All That},
	isbn = {978-0-521-77920-3},
	url = {https://www.cambridge.org/core/books/term-rewriting-and-all-that/71768055278D0DEF4FFC74722DE0D707},
	doi = {10.1017/CBO9781139172752},
	urldate = {2026-07-10},
	publisher = {Cambridge University Press},
	author = {Baader, Franz and Nipkow, Tobias},
	year = {1998},
}

@article{bruhn_journey_2015,
	title = {The {Journey} of the {Union}-{Closed} {Sets} {Conjecture}},
	volume = {31},
	issn = {1435-5914},
	url = {https://doi.org/10.1007/s00373-014-1515-0},
	doi = {10.1007/s00373-014-1515-0},
	language = {en},
	number = {6},
	urldate = {2026-04-24},
	journal = {Graphs and Combinatorics},
	author = {Bruhn, Henning and Schaudt, Oliver},
	month = nov,
	year = {2015},
	pages = {2043--2074},
}

@inproceedings{subercaseaux_automated_2024,
	address = {Cham},
	title = {Automated {Mathematical} {Discovery} and {Verification}: {Minimizing} {Pentagons} in the {Plane}},
	isbn = {978-3-031-66997-2},
	shorttitle = {Automated {Mathematical} {Discovery} and {Verification}},
	doi = {10.1007/978-3-031-66997-2_2},
	language = {en},
	booktitle = {Intelligent {Computer} {Mathematics}},
	publisher = {Springer Nature Switzerland},
	author = {Subercaseaux, Bernardo and Mackey, John and Heule, Marijn J. H. and Martins, Ruben},
	editor = {Kohlhase, Andrea and Kovács, Laura},
	year = {2024},
	pages = {21--41},
}

@article{brakensiek_resolution_2022,
	title = {The {Resolution} of {Keller}’s {Conjecture}},
	volume = {66},
	issn = {1573-0670},
	url = {https://doi.org/10.1007/s10817-022-09623-5},
	doi = {10.1007/s10817-022-09623-5},
	language = {en},
	number = {3},
	urldate = {2026-04-24},
	journal = {Journal of Automated Reasoning},
	author = {Brakensiek, Joshua and Heule, Marijn and Mackey, John and Narváez, David},
	month = aug,
	year = {2022},
	pages = {277--300},
}

@inproceedings{heule_schur_2018,
	title = {Schur {Number} {Five}},
	volume = {32},
	copyright = {Copyright (c)},
	issn = {2374-3468},
	url = {https://ojs.aaai.org/index.php/AAAI/article/view/12209},
	doi = {10.1609/aaai.v32i1.12209},
	language = {en},
	number = {1},
	urldate = {2026-04-24},
	booktitle = {Proceedings of the {AAAI} {Conference} on {Artificial} {Intelligence}},
	author = {Heule, Marijn},
	month = apr,
	year = {2018},
}

@inproceedings{heule_solving_2016,
	address = {Cham},
	title = {Solving and {Verifying} the {Boolean} {Pythagorean} {Triples} {Problem} via {Cube}-and-{Conquer}},
	isbn = {978-3-319-40970-2},
	doi = {10.1007/978-3-319-40970-2_15},
	language = {en},
	booktitle = {Theory and {Applications} of {Satisfiability} {Testing} – {SAT} 2016},
	publisher = {Springer International Publishing},
	author = {Heule, Marijn J. H. and Kullmann, Oliver and Marek, Victor W.},
	editor = {Creignou, Nadia and Le Berre, Daniel},
	year = {2016},
	pages = {228--245},
}

@article{eifler_safe_2022,
	title = {A {Safe} {Computational} {Framework} for {Integer} {Programming} {Applied} to {Chvátal}’s {Conjecture}},
	volume = {48},
	issn = {0098-3500, 1557-7295},
	url = {https://dl.acm.org/doi/10.1145/3485630},
	doi = {10.1145/3485630},
	language = {en},
	number = {2},
	urldate = {2025-03-23},
	journal = {ACM Transactions on Mathematical Software},
	author = {Eifler, Leon and Gleixner, Ambros and Pulaj, Jonad},
	month = jun,
	year = {2022},
	pages = {1--12},
}

@incollection{hutchison_cube_2012,
	address = {Berlin, Heidelberg},
	title = {Cube and {Conquer}: {Guiding} {CDCL} {SAT} {Solvers} by {Lookaheads}},
	volume = {7261},
	isbn = {978-3-642-34187-8 978-3-642-34188-5},
	shorttitle = {Cube and {Conquer}},
	url = {http://link.springer.com/10.1007/978-3-642-34188-5_8},
	language = {en},
	urldate = {2025-05-09},
	booktitle = {Hardware and {Software}: {Verification} and {Testing}},
	publisher = {Springer Berlin Heidelberg},
	author = {Heule, Marijn J. H. and Kullmann, Oliver and Wieringa, Siert and Biere, Armin},
	editor = {Hutchison, David and Kanade, Takeo and Kittler, Josef and Kleinberg, Jon M. and Mattern, Friedemann and Mitchell, John C. and Naor, Moni and Nierstrasz, Oscar and Pandu Rangan, C. and Steffen, Bernhard and Sudan, Madhu and Terzopoulos, Demetri and Tygar, Doug and Vardi, Moshe Y. and Weikum, Gerhard and Eder, Kerstin and Lourenço, João and Shehory, Onn},
	year = {2012},
	doi = {10.1007/978-3-642-34188-5_8},
	note = {Series Title: Lecture Notes in Computer Science},
	pages = {50--65},
}

@inproceedings{biere_cadical_kissat_2024,
	series = {Department of {Computer} {Science} {Report} {Series} {B}},
	title = {{CaDiCaL}, {Gimsatul}, {IsaSAT} and {Kissat} {Entering} the {SAT} {Competition} 2024},
	volume = {B-2024-1},
	booktitle = {Proc. of {SAT} {Competition} 2024 – {Solver}, {Benchmark} and {Proof} {Checker} {Descriptions}},
	publisher = {University of Helsinki},
	author = {Biere, Armin and Faller, Tobias and Fazekas, Katalin and Fleury, Mathias and Froleyks, Nils and Pollitt, Florian},
	editor = {Heule, Marijn and Iser, Markus and Järvisalo, Matti and Suda, Martin},
	year = {2024},
	pages = {8--10},
}

@misc{heule_cnc_2026,
	title = {marijnheule/{CnC}},
	url = {https://github.com/marijnheule/CnC},
	urldate = {2026-04-27},
	author = {Heule, Marijn},
	month = jan,
	year = {2026},
	note = {original-date: 2018-05-29T12:17:39Z},
}

@inproceedings{Jabs2025RustsatLibrarySat,
  title       = {{RustSAT}: {A} Library For {SAT} Solving in Rust},
  author      = {Jabs, Christoph},
  booktitle   = {28th International Conference on Theory and Applications of Satisfiability
  Testing ({SAT} 2025)},
  editor      = {Berg, Jeremias and Nordstr{\"o}m, Jakob},
  year        = {2025},
  volume      = {341},
  publisher   = {Schloss Dagstuhl---Leibniz-Zentrum f{\"{u}}r Informatik},
  series      = {Leibniz International Proceedings in Informatics ({LIPIcs})},
  pages       = {15:1--15:13},
  doi         = {10.4230/LIPIcs.SAT.2025.15},
  eprint      = {2505.15221},
}

@misc{friedgut2016chvatalsconjecturecorrelationinequalities,
      title={Chv\'{a}tal's Conjecture and Correlation Inequalities}, 
      author={Ehud Friedgut and Jeff Kahn and Gil Kalai and Nathan Keller},
      year={2016},
      eprint={1608.08954},
      archivePrefix={arXiv},
      primaryClass={math.CO},
      url={https://arxiv.org/abs/1608.08954}, 
}

@InProceedings{giuseppe_IP-counterexamples_2020,
author="Lancia, Giuseppe
and Pippia, Eleonora
and Rinaldi, Franca",
editor="Kononov, Alexander
and Khachay, Michael
and Kalyagin, Valery A
and Pardalos, Panos",
title="Using Integer Programming to Search for Counterexamples: A Case Study",
booktitle="Mathematical Optimization Theory and Operations Research",
year="2020",
publisher="Springer International Publishing",
address="Cham",
pages="69--84",
isbn="978-3-030-49988-4"
}

@InProceedings{kenter_IP-graph-pebbling_2018,
author="Kenter, Franklin
and Skipper, Daphne",
editor="Kim, Donghyun
and Uma, R. N.
and Zelikovsky, Alexander",
title="Integer-Programming Bounds on Pebbling Numbers of Cartesian-Product Graphs",
booktitle="Combinatorial Optimization and Applications",
year="2018",
publisher="Springer International Publishing",
address="Cham",
pages="681--695",
isbn="978-3-030-04651-4"
}

@article{pulaj_cutting-planes_2017,
	title = {Cutting planes for families implying {Frankl}’s conjecture},
	volume = {89},
	issn = {0025-5718, 1088-6842},
	url = {https://www.ams.org/mcom/2020-89-322/S0025-5718-2019-03461-0/},
	doi = {10.1090/mcom/3461},
	language = {English},
	number = {322},
	urldate = {2026-02-14},
	journal = {Mathematics of Computation},
	author = {Pulaj, Jonad},
	month = mar,
	year = {2020},
	pages = {829--857},
}

@incollection{ralphs_parallel-MILP_2018,
author="Ralphs, Ted
and Shinano, Yuji
and Berthold, Timo
and Koch, Thorsten",
editor="Hamadi, Youssef
and Sais, Lakhdar",
title="Parallel Solvers for Mixed Integer Linear Optimization",
bookTitle="Handbook of Parallel Constraint Reasoning",
year="2018",
publisher="Springer International Publishing",
address="Cham",
pages="283--336",
isbn="978-3-319-63516-3",
doi="10.1007/978-3-319-63516-3_8",
url="https://doi.org/10.1007/978-3-319-63516-3_8"
}

@InProceedings{chvatal_conjecture_1974,
author="Chv{\'a}tal, V.",
editor="Berge, Claude
and Ray-Chaudhuri, Dijen",
title="Intersecting families of edges in hypergraphs having the hereditary property",
booktitle="Hypergraph Seminar",
year="1974",
publisher="Springer Berlin Heidelberg",
address="Berlin, Heidelberg",
pages="61--66",
isbn="978-3-540-37803-7"
}

@article{parczyk_computer-assisted-combinatorics_2023, title={Fully Computer-Assisted Proofs in Extremal Combinatorics}, volume={37}, url={https://ojs.aaai.org/index.php/AAAI/article/view/26470}, DOI={10.1609/aaai.v37i10.26470}, abstractNote={We present a fully computer-assisted proof system for solving a particular family of problems in Extremal Combinatorics. Existing techniques using Flag Algebras have proven powerful in the past, but have so far lacked a computational counterpart to derive matching constructive bounds. We demonstrate that common search heuristics are capable of finding constructions far beyond the reach of human intuition. Additionally, the most obvious downside of such heuristics, namely a missing guarantee of global optimality, can often be fully eliminated in this case through lower bounds and stability results coming from the Flag Algebra approach. To illustrate the potential of this approach, we study two related and well-known problems in Extremal Graph Theory that go back to questions of Erdős from the 60s.
Most notably, we present the first major improvement in the upper bound of the Ramsey multiplicity of K_4 in 25 years, precisely determine the first off-diagonal Ramsey multiplicity number, and settle the minimum number of independent sets of size four in graphs with clique number strictly less than five.}, number={10}, journal={Proceedings of the AAAI Conference on Artificial Intelligence}, author={Parczyk, Olaf and Pokutta, Sebastian and Spiegel, Christoph and Szabó, Tibor}, year={2023}, month={Jun.}, pages={12482-12490} }

@article{hales_kepler_2017, title={A Formal Proof of the Kepler Conjecture}, volume={5}, DOI={10.1017/fmp.2017.1}, journal={Forum of Mathematics, Pi}, author={Hales, Thomas and Adams, Mark and Bauer, Gertrud and Dang, Tat Dat and Harrison, John and Hoang, Le Truong and Kaliszyk, Cezary and Magron, Victor and McLaughlin, Sean and Niguyen, Tat Thang and et al.}, year={2017}, pages={e2}}

@misc{czabarka2017chvatalsconjecturedownsetssmall,
      title={Chv\'atal's conjecture for downsets of small rank}, 
      author={Eva Czabarka and Glenn Hurlbert and Vikram Kamat},
      year={2017},
      eprint={1703.00494},
      archivePrefix={arXiv},
      primaryClass={math.CO},
      url={https://arxiv.org/abs/1703.00494}, 
}

@InProceedings{cdcl_sym,
author="Metin, Hakan
and Baarir, Souheib
and Colange, Maximilien
and Kordon, Fabrice",
editor="Beyer, Dirk
and Huisman, Marieke",
title="CDCLSym: Introducing Effective Symmetry Breaking in SAT Solving",
booktitle="Tools and Algorithms for the Construction and Analysis of Systems",
year="2018",
publisher="Springer International Publishing",
address="Cham",
pages="99--114",
}

@misc{satsuma,
      title={satsuma: Structure-based Symmetry Breaking in SAT}, 
      author={Markus Anders and Sofia Brenner and Gaurav Rattan},
      year={2024},
      eprint={2406.13557},
      archivePrefix={arXiv},
      primaryClass={cs.DS},
      url={https://arxiv.org/abs/2406.13557}, 
}

@article{margot_pruning_2002,
	title = {Pruning by isomorphism in branch-and-cut},
	volume = {94},
	issn = {1436-4646},
	url = {https://doi.org/10.1007/s10107-002-0358-2},
	doi = {10.1007/s10107-002-0358-2},
	journal = {Mathematical Programming},
	author = {Margot, François},
	month = dec,
	year = {2002},
	pages = {71--90},
}

@article{nauty,
    title = {Practical graph isomorphism, II},
    volume = {60},
    url = {https://doi.org/10.1016/j.jsc.2013.09.003},
    doi = {10.1016/j.jsc.2013.09.003},
    journal = {Journal of Symbolic Computation},
    author = {McKay, Brendan and Piperno, Adolfo},
    year = {2014},
    pages = {94--112},
}

@misc{hojny2025scipoptimizationsuite10,
      title={The SCIP Optimization Suite 10.0}, 
      author={Christopher Hojny and Mathieu Besançon and Ksenia Bestuzheva and Sander Borst and João Dionísio and Johannes Ehls and Leon Eifler and Mohammed Ghannam and Ambros Gleixner and Adrian Göß and Alexander Hoen and Jacob von Holly-Ponientzietz and Rolf van der Hulst and Dominik Kamp and Thorsten Koch and Kevin Kofler and Jurgen Lentz and Marco Lübbecke and Stephen J. Maher and Paul Matti Meinhold and Gioni Mexi and Til Mohr and Erik Mühmer and Krunal Kishor Patel and Marc E. Pfetsch and Sebastian Pokutta and Chantal Reinartz Groba and Felipe Serrano and Yuji Shinano and Mark Turner and Stefan Vigerske and Matthias Walter and Dieter Weninger and Liding Xu},
      year={2025},
      eprint={2511.18580},
      archivePrefix={arXiv},
      primaryClass={math.OC},
      url={https://arxiv.org/abs/2511.18580}, 
}

@InProceedings{bailleux_totalizer_2003,
author="Bailleux, Olivier
and Boufkhad, Yacine",
editor="Rossi, Francesca",
title="Efficient CNF Encoding of Boolean Cardinality Constraints",
booktitle="Principles and Practice of Constraint Programming -- CP 2003",
year="2003",
publisher="Springer Berlin Heidelberg",
address="Berlin, Heidelberg",
pages="108--122",
isbn="978-3-540-45193-8"
}

@incollection{maher_pyscipopt_2016,
  author = {Stephen Maher and Matthias Miltenberger and Jo{\~{a}}o Pedro Pedroso and Daniel Rehfeldt and Robert Schwarz and Felipe Serrano},
  title = {{PySCIPOpt}: Mathematical Programming in Python with the {SCIP} Optimization Suite},
  booktitle = {Mathematical Software {\textendash} {ICMS} 2016},
  publisher = {Springer International Publishing},
  pages = {301--307},
  year = {2016},
  doi = {10.1007/978-3-319-42432-3_37},
}

@InProceedings{painless,
author="Le Frioux, Ludovic
and Baarir, Souheib
and Sopena, Julien
and Kordon, Fabrice",
editor="Gaspers, Serge
and Walsh, Toby",
title="PaInleSS: A Framework for Parallel SAT Solving",
booktitle="Theory and Applications of Satisfiability Testing -- SAT 2017",
year="2017",
publisher="Springer International Publishing",
address="Cham",
pages="233--250",
isbn="978-3-319-66263-3"
}

@InProceedings{paracooba,
author="Heisinger, Maximilian
and Fleury, Mathias
and Biere, Armin",
editor="Pulina, Luca
and Seidl, Martina",
title="Distributed Cube and Conquer with Paracooba",
booktitle="Theory and Applications of Satisfiability Testing -- SAT 2020",
year="2020",
publisher="Springer International Publishing",
address="Cham",
pages="114--122",
isbn="978-3-030-51825-7"
}

@misc{wu2025cubingtuning,
      title={Cubing for Tuning}, 
      author={Haoze Wu and Clark Barrett and Nina Narodytska},
      year={2025},
      eprint={2504.19039},
      archivePrefix={arXiv},
      primaryClass={cs.LO},
      url={https://arxiv.org/abs/2504.19039}, 
}

@misc{klausesclauses,
      title={From Clauses to Klauses}, 
      author={Joseph Reeves and Martin Heule and Randal Bryant},
      year={2024},
      url={https://www.cs.cmu.edu/~jereeves/research/cav24-paper.pdf}, 
}

@inproceedings{cheung_verifying_2017,
	address = {Cham},
	series = {Lecture {Notes} in {Computer} {Science}},
	title = {Verifying {Integer} {Programming} {Results}},
	volume = {10328},
	doi = {10.1007/978-3-319-59250-3_13},
	booktitle = {Integer {Programming} and {Combinatorial} {Optimization} ({IPCO} 2017)},
	publisher = {Springer},
	author = {Cheung, Kevin K. H. and Gleixner, Ambros and Steffy, Daniel E.},
	year = {2017},
	pages = {148--160},
}

@dataset{looney_vipr-proof,
  author       = {Looney, Jesse and
                  Klingler, Allison and
                  McDonald, Jonah and
                  Wu, Gloria and
                  Pulaj, Jonad and
                  Wu, Haoze},
  title        = {Cube-and-Conquer VIPR Proof of Chvátal's
                   Conjecture for Ground Sets of Size 8
                  },
  month        = jul,
  year         = 2026,
  publisher    = {Zenodo},
  doi          = {10.5281/zenodo.21603128},
  url          = {https://doi.org/10.5281/zenodo.21603128},
}

@inproceedings{wilson2025per,
  title={Per-Instance Subproblem Generation for Strategy Selection in SMT},
  author={Wilson, Amalee and Narodytska, Nina and Barrett, Clark and Wu, Haoze},
  booktitle={Formal Methods in Computer Aided Design (FMCAD)},
  pages={94--103},
  year={2025},
  organization={TU Wien Academic Press}
}

@inproceedings{wu2023lightweight,
  title={Lightweight Online Learning for Sets of Related Problems in Automated Reasoning},
  author={Wu, Haoze and Hahn, Christopher and Lonsing, Florian and Mann, Makai and Ramanujan, Raghuram and Barrett, Clark},
  booktitle={2023 Formal Methods in Computer-Aided Design (FMCAD)},
  pages={1--11},
  year={2023},
  organization={IEEE}
}

@inproceedings{nair2022proof,
  title={Proof-stitch: Proof combination for divide-and-conquer sat solvers},
  author={Nair, Abhishek and Chattopadhyay, Saranyu and Wu, Haoze and Ozdemir, Alex and Barrett, Clark},
  booktitle={2022 Formal Methods in Computer-Aided Design (FMCAD)},
  pages={84--88},
  year={2022},
  organization={IEEE}
}

\newpage
\onecolumn
\appendix
\renewcommand\thetable{\thesection.\arabic{table}}
\renewcommand\thetheorem{\thesection.\arabic{theorem}}
\section{Deferred Proofs}
\label{section:proofs}

\restate{thm:symbreak-preserves-counterexamples}
\begin{proof}
    Induction on $k$. When $k=0$, the claim is trivial. Suppose the claim holds for some $k \in \nat$. Then it suffices to show $\phi_\exists(\Nodes_{k+1}^\#) = \phi_\exists(\Nodes_k^\#)$ because $\phi_\exists(\Nodes_k^\#) = \phi_\exists(\Nodes_0^\#)$. Consider the $k$th rule application (which transformed $\Nodes_k$ into $\Nodes_{k+1}$).
    
    If the \prune{} rule was used with target node $N' \in \Nodes_k$, we have $\phi_\exists(N'^\#) = false$ and
    \begin{align}
        \Nodes_k^\#
            = \bigcup_{N \in \Nodes_k} N^\#
            = N'^\# \cup \bigcup_{N \in \Nodes_k \setminus \{N'\}} N^\#
            = N'^\# \cup \Nodes_{k+1}^\#.
    \end{align}
    Therefore,
    \begin{align}
        \phi_\exists(\Nodes_k^\#)
            = \phi_\exists(N'^\# \cup \Nodes_{k+1}^\#)
            = \phi_\exists(N'^\#) \lor \phi_\exists(\Nodes_{k+1}^\#)
            = false \lor \phi_\exists(\Nodes_{k+1}^\#)
            = \phi_\exists(\Nodes_{k+1}^\#).
    \end{align}

    Otherwise, the \branch{} rule was used with target node $N' = \node{f}{B} \in \Nodes_k$, creating children $N_1 = \node{p}{B}$ and $N_2 \node{f}{B \cup \{p\}}$. Using Lemma \ref{lem:children-union} to turn $N'^\#$ into $N_1^\# \cup N_2^\#$, we have
    \begin{align}
        \Nodes_k^\#
            = N'^\# \cup \bigcup_{N \in \Nodes_k \setminus \{N'\}} N^\#
            = N_1^\# \cup N_2^\# \cup \bigcup_{N \in \Nodes_k \setminus \{N'\}} N^\#
            = \bigcup_{N \in (\Nodes_k \setminus \{N'\}) \cup \{N_1, N_2\}} N^\#
            = \Nodes_{k+1}^\#
    \end{align}
    so $\phi_\exists(\Nodes_{k+1}^\#) = \phi_\exists(\Nodes_k^\#)$. By induction, the claim holds for all $k \in \nat$.
\end{proof}

\restate{thm:symbreak-preserves-disjoint}
\begin{proof}
    Induction on $k$. When $k=0$, the claim is trivial. Suppose the claim holds for some $k \in \nat$. Suppose also that the nodes in $\Nodes_0$ have pairwise disjoint concretizations. Then the nodes in $\Nodes_k$ have pairwise disjoint concretizations by the induction hypothesis. We will show that the same is true for $\Nodes_{k+1}$. Consider the $k$th rule application (which transformed $\Nodes_k$ into $\Nodes_{k+1}$).

    If the \prune{} rule was used with target node $N \in \Nodes_k$, the result is trivial: $\Nodes_{k+1} = \Nodes_k \setminus \{N\}$, so the remaining nodes in $\Nodes_{k+1}$ must still have pairwise disjoint concretizations.

    Otherwise, the \branch{} rule was used with target node $N = \node{f}{B} \in \Nodes_k$, creating children $N_1 = \node{p}{B}$ and $N_2 = \node{f}{B \cup \{p\}}$. By Lemma \ref{lem:children-union}, $N_1^\# \cup N_2^\# = N^\#$. In particular, $N_1^\#, N_2^\# \subseteq N^\#$. Therefore, since $N^\#$ is disjoint with the concretization of anything in $\Nodes_k \setminus \{N\}$, $N_1^\#$ and $N_2^\#$ must have the same property. Consequently, in order for $\Nodes_{k+1} = (\Nodes_k \setminus \{N\}) \cup \{N_1, N_2\}$ to have pairwise disjoint concretizations, we only need that $N_1^\#$ is disjoint with $N_2^\#$. Fortunately, Lemma \ref{lem:children-intersection} guarantees exactly that. By induction, the claim holds for all $k \in \nat$.
\end{proof}

\restate{thm:symbreak-terminates}
\begin{proof}
    Suppose for contradiction that there is an infinite sequence of rule applications. Denote the rule sequence by $(r_k)_{k=1}^\infty$, and denote the companion sequence of states of $\Nodes$ by $(\Nodes_k)_{k=0}^\infty$, where $\Nodes_0$ is the initial state and each $\Nodes_{k+1}$ is the state achieved by applying $r_{k+1}$ in state $\Nodes_k$. Let $m$ be the maximum magnitude among the nodes in $\Nodes_0$. To each state $\Nodes_k$, we assign a tuple $X^k = (x^k_m, \dots, x^k_0) \in \nat^{m+1}$, where $x^k_i$ is the number of nodes of magnitude $i$ in $\Nodes_k$.
    
    Since $(\nat, >)$ is terminating (i.e. well-ordered), $(\nat^{m+1}, >_{lex})$ is terminating, where $>_{lex}$ is the $(m+1)$-fold lexicographic product of $>$. Hence, there exist no infinite descending chains in $(\nat^{m+1}, >_{lex})$. (See \cite[pp.~7--19]{baader_term-rewriting} for these notions and results.) However, we will show that the infinite sequence $(X^k)_{k=0}^\infty$ is strictly decreasing, producing a contradiction.
    
    Claim 1: For all $k \in \nat$, the magnitudes of the nodes in $\Nodes_k$ are bounded above by $m$.
    Proof: Induction on $k$. When $k=0$, the claim follows from our choice of $m$. If the claim holds for some $k \in \nat$, then consider $r_{k+1}$. If the \prune{} rule was used, then $\Nodes_{k+1} \subseteq \Nodes_k$, so the magnitudes of the nodes in $\Nodes_{k+1}$ must still be bounded above by $m$. If the \branch{} rule was used, then $\Nodes_{k+1}$ is the same as $\Nodes_k$ except that a node has been replaced by two nodes with smaller magnitudes (by Lemma \ref{lem:branch-decreases-magnitude}). Hence, the magnitudes of the nodes in $\Nodes_{k+1}$ must still be bounded above by $m$. By induction, the claim holds for all $k \in \nat$.
    
    We will now show that the infinite sequence $(X^k)_{k=0}^\infty$ is strictly decreasing, as required. Let $k \in \nat$ and consider $r_{k+1}$. Call the target node of this rule application $N \in \Nodes_k$, let $i = \norm{N}$, and note that $i \in [0, m]$ by Claim 1 and the definition of magnitude. If the \prune{} rule was used, then $X^{k+1} = (x^k_m, \dots, x^k_i - 1, \dots, x^k_0) <_{lex} X^k$. If the \branch{} rule was used, let $i_1$ and $i_2$ denote the magnitudes of the children created by the branch, and note that $i_1, i_2 < i$ by Lemma \ref{lem:branch-decreases-magnitude} (say WLOG that $i_1 \leq i_2$). Therefore,
    \begin{align}
        X^{k+1} &= (x^k_m, \dots, x^k_i - 1, \dots, x^k_{i_2} + 1, \dots, x^k_{i_1} + 1, \dots, x^k_0) \\
        \text{or} \notag\\
        X^{k+1} &= (x^k_m, \dots, x^k_i - 1, \dots, x^k_{i_2} + 2, \dots, x^k_0)
    \end{align}
    depending on whether $i_1 = i_2$ or not. In either case, $X^{k+1} <_{lex} X^k$ because the additions to $x_{i_1}$ and $x_{i_2}$ occur at strictly later indices than the subtraction from $x_i$. This is true for any $k \in \nat$, so $(X^k)_{k=0}^\infty$ is strictly decreasing.
\end{proof}

\restate{thm:prune-formula-sym}
\begin{proof}
$(\implies)$ Suppose $\phi_\exists(N^\#)$. Then there is a family $g \in N^\# = \compiso(f) \setminus \compiso(B)$ that satisfies $\phi(g)$. Since $g \in \compiso(f)$, we have $g \supseteq f'$ for some $f' \in [f]$ (so there exists a permutation $\pi_f$ that sends $f'$ to $f$). Let $g_1 = \pi_f(g) \in [g]$, so that $g_1 \supseteq \pi_f(f') = f$.

    Our goal is to find some $g_2 \in [g]$ such that $f \subseteq g_2$ and $g_2 \in \compiso(p) \implies p \subseteq g_2$. There are two cases:

    \textbf{Case 1:} $g_1 \notin \compiso(p)$. Then we simply let $g_2 = g_1 \in [g]$. We have $g_2 = g_1 \supseteq f$ and the additional implication, so $g_2$ is as desired.

    \textbf{Case 2:} $g_1 \in \compiso(p)$. Then $p' \subseteq g_1$ for some $p' \in [p]$. There is a permutation $\pi_p$ that sends $p'$ to $p$. Let $g_2 = \pi_p(g_1) \in [g]$, so that $g_2 \supseteq \pi_p(p') = p$. Since $p \in \extiso(f)$, we have $p \supseteq f$, so $g_2 \supseteq f$, and $g_2$ is as desired.

    In either case, we proceed with the $g_2$ that we found. Since $\phi$ is isomorphism-invariant, we have $\phi(g_2) = \phi(g) = true$. Lemma \ref{lem:banned-formula} gives $g_2 \notin \compiso(B)$ because $g \notin \compiso(B)$. We also have $f \subseteq g_2$ and $g_2 \in \compiso(p) \implies p \subseteq g_2$ by construction. Therefore, $\phi_N^p(g) = true$, so $\phi_N^p$ is satisfiable.
    
    $(\impliedby)$ Suppose there is a family $g$ that satisfies $\phi_N^p$. In particular, $g$ must satisfy $\phi_N$. Since $\phi_N$ is satisfiable, Theorem \ref{thm:prune-formula} entails that $\phi_\exists(N^\#)$.
\end{proof}

The following theorem is not stated in the main text, but Lemma \ref{lem:no-branch-if-f-banned} actually gives us the surprising result that branching never creates a family with a banned current family. That means we will never create empty nodes---instead, terminal nodes must necessarily have all their extensions banned.

\setcounter{theorem}{0}
\begin{theorem}
    \label{thm:branch-never-bans-f}
    Let $N_1 = \node{p}{B}$ and $N_2 = \node{f}{B \cup \{p\}}$ be the child nodes spawned by applying the \branch{} rule to a node $N = \node{f}{B}$ in a proof system $\prfsys(\phi, \extiso, \compiso)$. Then $p \notin \compiso(B)$ and $f \notin \compiso(B \cup \{p\})$.
\end{theorem}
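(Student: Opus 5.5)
The plan is to prove the two claims separately; each one reduces to facts already established about $\compiso$.

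\emph{First claim: $p \notin \compiso(B)$.} I will argue by contradiction. Suppose $p \in \compiso(B)$. Then $p \in \compiso(b)$ for some $b \in B$. Lemma \ref{lem:compiso-f-in-compiso-g} then gives
\[
\compiso(p) \subseteq \compiso(b) \subseteq \compiso(B).
\]
This contradicts the premise $\compiso(p) \not\subseteq \compiso(B)$ of the \branch{} rule. Equivalently, this is Lemma \ref{lem:no-branch-if-f-banned}'s argument applied to the node $\node{p}{B}$.

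\emph{Second claim: $f \notin \compiso(B \cup \{p\})$.} Since $\compiso(B \cup \{p\}) = \compiso(B) \cup \compiso(p)$, I will split into two parts.

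\textbf{Part (a): $f \notin \compiso(B)$.} The \branch{} rule was applied to $N = \node{f}{B}$, so Lemma \ref{lem:no-branch-if-f-banned} gives $f \notin \compiso(B)$ directly.

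\textbf{Part (b): $f \notin \compiso(p)$.} Suppose instead that $f \in \compiso(p)$. Lemma \ref{lem:compiso-f-in-compiso-g} then gives $\compiso(f) \subseteq \compiso(p)$. On the other hand, $p \in \extiso(f)$, so by Theorem \ref{thm:extiso-compiso-satisfy-ext-comp} and Property \ref{eq:ext-comp-property} we have $\compiso(p) \subsetneq \compiso(f)$. These two inclusions contradict each other. A cardinality argument works as an alternative: any $p' \in [p]$ has $|p'| = |p| = |f| + 1 > |f|$, so $f$ cannot contain such a $p'$.

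Combining (a) and (b) gives the second claim. There is no real obstacle here. The only point needing care is writing out the decomposition $\compiso(B \cup \{p\}) = \compiso(B) \cup \compiso(p)$, which follows immediately from the definition $\comp(B) = \bigcup_{b \in B} \comp(b)$.
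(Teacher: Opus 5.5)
Your proposal is correct and matches the paper's proof: the first claim follows from Lemma~\ref{lem:compiso-f-in-compiso-g} together with the \branch{} premise $\compiso(p) \not\subseteq \compiso(B)$, and the second splits into $f \notin \compiso(p)$ (Lemma~\ref{lem:compiso-f-in-compiso-g} plus Property~\ref{eq:ext-comp-property}) and $f \notin \compiso(B)$ (Lemma~\ref{lem:no-branch-if-f-banned}). Your alternative cardinality argument for $f \notin \compiso(p)$ is also valid; it is the same reasoning the paper uses to prove Theorem~\ref{thm:extiso-compiso-satisfy-ext-comp}.
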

\begin{proof}
    Suppose for contradiction that $p \in \compiso(B)$. Then $p \in \compiso(b)$ for some $b \in B$. By Lemma \ref{lem:compiso-f-in-compiso-g}, that means $\compiso(p) \subseteq \compiso(b) \subseteq \compiso(B)$. But since we applied the \branch{} rule, we must have had $\compiso(p) \not\subseteq \compiso(B)$, a contradiction. Hence, $p \notin \compiso(B)$.

    Suppose for contradiction that $f \in \compiso(p)$. By Lemma \ref{lem:compiso-f-in-compiso-g}, that means $\compiso(f) \subseteq \compiso(p)$. But $p \in P \subseteq \extiso(f)$, so by Property \ref{eq:ext-comp-property}, we have $\compiso(p) \subset \compiso(f)$, a contradiction. Hence, $f \notin \compiso(p)$. Moreover, $f \notin \compiso(B)$ by Lemma \ref{lem:no-branch-if-f-banned}, so we have $f \notin \compiso(B) \cup \compiso(p) = \compiso(B \cup \{p\})$. 
\end{proof}

\end{document}